\numberwithin{equation}{section}
\numberwithin{figure}{section}
\theoremstyle{plain}
\newtheorem{thm}{\protect\theoremname}
\theoremstyle{plain}
\newtheorem{lem}[thm]{\protect\lemmaname}
\providecommand{\lemmaname}{Lemma}
\providecommand{\theoremname}{Theorem}
\begin{document}
\title{Communication-Optimal Tilings for Projective Nested Loops with Arbitrary
Bounds}
\author{Grace Dinh$^{*}$}
\thanks{$^{*}$Computer Science Division, Univ. of California, Berkeley, CA
94720 \texttt{dinh@berkeley.edu}}
\author{James Demmel$^{\dagger}$}
\thanks{$^{\dagger}$Computer Science Div. and Mathematics Dept., Univ. of
California, Berkeley, CA 94720 \texttt{demmel@berkeley.edu}}
\date{\today}

\maketitle
\global\long\def\rank{\text{{rank}}}%

\global\long\def\supp{\text{{supp}}}%

\global\long\def\st{\text{{\text{ s.t. }}}}%

\global\long\def\ov{\text{{\ensuremath{\vec{1}}}}}%

\global\long\def\sv{\text{{\ensuremath{\boldsymbol{\vec{s}}}}}}%

\begin{abstract}
Reducing communication - either between levels of a memory hierarchy
or between processors over a network - is a key component of performance
optimization (in both time and energy) for many problems, including
dense linear algebra \cite{BCD+14}, particle interactions \cite{DGKSY13},
and machine learning \cite{DD18,GAB+18}. For these problems, which
can be represented as nested-loop computations, previous tiling based
approaches \cite{CDK+13,DR16} have been used to find both lower bounds
on the communication required to execute them and optimal rearrangements,
or blockings, to attain such lower bounds. However, such general approaches
have typically assumed the problem sizes are large, an assumption
that is often not met in practice. For instance, the classical $(\text{\# arithmetic operations})/(\text{cache size})^{1/2}$
lower bound for matrix multiplication \cite{HK81,BCD+14} is not tight
for matrix-vector multiplications, which must read in at least $O(\text{\# arithmetic operations})$
words of memory; similar issues occur for almost all convolutions
in machine learning applications, which use extremely small filter
sizes (and therefore, loop bounds). 

In this paper, we provide an efficient way to both find and obtain,
via an appropriate, efficiently constructible blocking, communication
lower bounds and matching tilings which attain these lower bounds
for nested loop programs with \emph{arbitrary} loop bounds that operate
on multidimensional arrays in the \emph{projective }case, where the
array indices are subsets of the loop indices. Our approach works
on all such problems, regardless of dimensionality, size, memory access
patterns, or number of arrays, and directly applies to (among other
examples) matrix multiplication and similar dense linear algebra operations,
tensor contractions, $n$-body pairwise interactions, pointwise convolutions,
and fully connected layers.
\end{abstract}

\maketitle

\section{Introduction}

Many structured computations, including dense linear algebra, $n$-body
problems, and many machine learning kernels, can be expressed as a
collection of \emph{nested loops}, where each iteration accesses elements
from several multidimensional arrays, indexed by some function of
the current loop iteration function:
\begin{equation}
\begin{aligned} & \text{for }x_{1}\in\left[L_{1}\right],...,\text{for }x_{d}\in\left[L_{d}\right]:\\
 & \qquad\text{perform operations on }A_{1}\left[\phi_{1}\left(x_{1},...,x_{d}\right)\right],...,A_{n}\left[\phi_{n}\left(x_{1},...,x_{d}\right)\right]
\end{aligned}
\ .\label{eq:nested-loop-1}
\end{equation}

where $[L_{1}]$ represents the set $\{1,...,L_{1}\}$. For many such
problems, the time and energy costs of communication - that is, moving
data between different levels of the memory hierarchy, or between
different cores or processors - can significantly outweigh the cost
of computation in practice \cite{BCD+14}. For example, communication-optimized
implementations of matrix multiply \cite{HK81,BCD+14}, n-body problems
\cite{DGKSY13}, and convolutional neural nets \cite{GAB+18}, among
others, have significantly outperformed their non-communication-optimized
counterparts. Therefore, rearranging the order in which we perform
these operations by dividing the nested loops into subsets called
\emph{tiles} which are executed in sequence can lead to significantly
improved results in practice.

Most previous applied work, including that cited above, has been focused
on finding communication-optimal tilings and lower bounds for \emph{specific}
problems. While this is useful for commonly used kernels whose optimizations
can impact performance across a large number of applications (e.g.
matrix multiply, convolutions), it is less practicable to develop
new theory for and hand-optimize algorithms whose applications fall
into smaller niches. This has stymied research into, for instance,
unconventional neural net architectures such as capsule networks \cite{HSF18},
which require optimized kernels to test at scale but lack such kernels
due to being unproven and not widely used \cite{BI19}.

Progress has also been made \cite{CDK+13,DR16} in generalizing some
of these techniques by considering communication patterns via the
\emph{Brascamp-Lieb inequalities}, which apply to any loop nest where
the array indices are affine functions of the loop indices (i.e. the
$\phi_{i}$ above are affine). These methods provide both communication
lower bounds and constructions for tilings for such problems.

Unfortunately, the above lines of work have largely ignored situations
when certain loop bounds ($L_{i}$, above) are small. In this case,
the methods can produce weak lower bounds and infeasible tilings.
Take, for instance, the case of matrix multiplication:
\begin{align*}
 & {\rm for}\,\{x_{1},x_{2},x_{3}\}\in[L_{1}]\times[L_{2}]\times[L_{d}]\\
 & \ \ \ \ \;\;A_{1}(x_{1},x_{3})+=A_{2}(x_{1},x_{2})\times A_{3}(x_{2},x_{3})
\end{align*}

Existing combinatorial and geometric \cite{BCD+14}, techniques states
that a lower bound on the communication between a cache of size $M$
and main memory required to execute this set of instructions is
\[
\Omega\left(L_{1}L_{2}L_{3}/M^{1/2}\right)
\]
words of memory, and may be attained by rewriting the nested loops
as follows:
\begin{align*}
 & {\rm for}\,\{o_{1},o_{2},o_{3}\}\in[0..L_{1}/B_{1}-1]\times[0..L_{2}/B_{2}-1]\times[0..L_{3}/B_{3}-1]\\
 & \ \ \ \ \;\;{\rm for}\,\{i_{1},i_{2},i_{3}\}\in[B_{1}]\times[B_{2}]\times[B_{d}]\\
 & \ \ \ \ \;\;\ \ \ \ \;\;x_{1}=B_{1}o_{1}+i_{1}\\
 & \ \ \ \ \;\;\ \ \ \ \;\;x_{2}=B_{2}o_{2}+i_{2}\\
 & \ \ \ \ \;\;\ \ \ \ \;\;x_{2}=B_{2}o_{2}+i_{2}\\
 & \ \ \ \ \;\;\ \ \ \ \;\;A(x_{1},x_{3})+=A_{2}(x_{1},x_{2})\times A_{3}(x_{2},x_{3})
\end{align*}
where the \emph{tile }(the three inner loops) has dimensions $B_{1}=B_{2}=B_{3}\lessapprox\sqrt{M/3}$.

However, when $L_{1}<\sqrt{M/3}$, this tiling becomes infeasible.
Furthermore, the lower bound also ceases to be useful. For instance,
when $L_{3}=1$, corresponding to a matrix-vector multiplication,
the minimum communication needed to evaluate this multiplication is
at least $L_{1}L_{2}$, since $A_{2}$ must be read in its entirety.
However, the previous lower bound evaluates to $\Omega\left(L_{1}L_{2}/M^{1/2}\right)$,
which is clearly unachievable.

\cite{DD18} addresses this situation for convolutions, finding a
separate lower bound (and a corresponding, feasible, tiling) for the
case when the filter size is small (as they often are in most CNNs).
In this paper, we apply the techniques from \cite{DD18} to find a
\emph{general} communication lower bound and optimal tiling for \emph{arbitrary
loop bounds} in the case where the array accesses are all subsets
of the loop bounds (the so-called ``projective case'', which applies
to most dense linear algebra applications, as well as point convolutions),
and in doing so we prove that the optimal tile shape for a projective
loop nest is always a rectangle. We review the proof in the large-bound
case in Section \ref{sec:allarge}, present a stronger communication
lower bound that encompasses bounds of arbitrary size in Section \ref{sec:The-Lower-Bound},
and present a linear program that gives the actual tiling required
to achieve this lower bound (proving that it is tight) in Section
\ref{sec:Tiling-construction}. We then conclude with several examples
and a discussion in Sections \ref{sec:Examples-and-Applications}
and \ref{sec:ack}.

\section{Problem Setup, Preliminaries, Notation, and Definitions}

Define $[n]$ to be the set $\{1,2,...,n]$, and $[m,n]$ to be the
set $[m,m+1,...,n-1,n]$.

Formally, we will concern ourselves with the following $d$-level
nested-loop program, which consists of operations on the elements
of the $d_{1},...,d_{n}$-dimensional arrays $A_{1},...,A_{n}$ indexed
by affine functions $\phi_{i}:\mathbb{Z}^{d}\rightarrow\mathbb{Z}^{d_{i}}$
for $i\in[n]$:

\begin{equation}
\begin{aligned} & \text{for }x_{1}\in\left[L_{1}\right],...,\text{for }x_{d}\in\left[L_{d}\right]:\\
 & \qquad\text{perform operations on }A_{1}\left[\phi_{1}\left(x_{1},...,x_{d}\right)\right],...,A_{n}\left[\phi_{n}\left(x_{1},...,x_{d}\right)\right]
\end{aligned}
\label{eq:nested-loop}
\end{equation}

This representation includes many commonly used matrix and tensor
operations, including most linear algebra operations, tensor contractions,
and convolutional neural nets.

We will assume that each $x_{i}$ is present in the support of at
least one of the $\phi_{j}$; this assumption may be made without
loss of generality as in \cite{CDK+13}.

Let us formally model the machine as follows: suppose we have a processor
attached to a cache of size $M$, which is in turn connected to a
slow memory of unlimited size. The processor may only perform operations
on elements of the arrays present in the cache, and we wish to find
a reordering of the operations in (\ref{eq:nested-loop}) that minimizes
the amount of communication between the cache and the slow memory.

\cite{CDK+13} provides a tight lower bound for communication complexity
in this model when $L_{1},...,L_{d}$ are sufficiently large, as follows:
First, represent each operation in (\ref{eq:nested-loop}), indexed
by $x_{1},...,x_{d}$, as the point indexed by the vector $(x_{1},...,x_{d})\in\mathbb{Z}^{d}$
. As a result, the entire set of operations represented by (\ref{eq:nested-loop})
can be treated as the hyper-rectangle of $\boldsymbol{x}\in\left[L_{1}\right]\times...\times\left[L_{d}\right]$.
Furthermore, note that the element of array $A_{i}$ of memory required
for the operation indexed by $(x_{1},...,x_{d})$ is $\phi_{i}(x_{1},...,x_{d})$;
in particular, given a set $S\subset\mathbb{Z}^{d}$ of operations,
the elements of $A_{i}$ it requires are indexed by $\phi_{i}(S)$.
As a result, it suffices to find a lower bound on the number of subsets
(or an upper bound on the size of a single subset) needed to tile
the hyper-rectangle, with each one corresponding to a segment of the
program that can be executed without going back to main memory. To
satisfy this condition, we require that each subset (``tile'') $S$
satisfy the condition:
\[
\left|\phi_{i}(S)\right|\le M
\]
as we cannot use more than $M$ words memory in a computation without
going to slow memory.

Invoking the discrete Brascamp-Lieb inequality \cite{BCCT10,CDK+13},
we get that any such tile has volume at most $M^{\sum_{i\in[n]}s_{i}}$,
where the $s_{i}$ are the solutions to the linear program:
\begin{eqnarray}
\min\sum_{i\in[1..n]}s_{i}\st\label{eq:hbl}\\
\sum_{i=1}^{n}s_{i}\rank(\phi_{i}(H)) & \ge & \rank(H)\qquad\forall\text{ subgroups }H\le\mathbb{Z}^{d}\nonumber 
\end{eqnarray}
This implies that the minimum number of tiles needed to cover the
entire hyper-rectangle is at least $\prod_{i\in[1..d]}L_{i}/M^{\sum_{i\in[n]}s_{i}}$.
Since each tile corresponds to an execution of a subset of operations
without going back to slow memory, and we must complete all operations
in the 'hypercube', the total number of words transferred between
slow and fast memory must be at least
\[
\Omega\left(\frac{\prod_{i\in[d]}L_{i}}{M^{\sum_{i\in[n]}s_{i}-1}}\right)\ .
\]

An explicit construction of a tile shape that achieves this lower
bound is described in \cite{RD16}. We will review this later in the
paper for the projective case.

\section{Large Indices}

\label{sec:allarge}

In this section, we review the techniques used to construct optimal
tilings and find communication lower bounds for nested-loop programs
with large indices. Our approach in this section will be the building
block for what we do in Sections We limit our attention to cases where
the index functions, $\phi_{i}$, are projections: that is, their
output is a subset of the input. For convenience, let the indices
of the output of $\phi_{i}$ be denoted $\supp(\phi_{i})$. For instance,
if $\phi(x_{1},...,x_{5})=(x_{1},x_{4})$, then $\supp(\phi)=\{1,4\}$.

In order to find the communication lower bound, it suffices to solve
the LP (\ref{eq:hbl}). This linear program has one constraint for
each subgroup $H\le\mathbb{Z}^{d}$; since the number of such subgroups
is infinite, determining a finite closed form for inequalities using
a brute-force enumeration of all possible $H$ is impossible. Note,
however, that since $\rank(\phi_{i}(H))\le\rank(H)=d$, the number
of non-unique constraints is at most $(d+1)^{n+1}$ . In the general,
continuous case - that is, for arbitrary affine $\phi$, with $H$
ranging over subgroups of $\mathbb{R}^{d}$ rather than $\mathbb{Z}^{d}$
- an algorithm guaranteed to terminate in finite (but unbounded) time
was given by Valdimarrson \textbf{\cite{Val10}}. A separation oracle
for the resulting polytope was given in \cite{GGOW16_bl}, which immediately
implies an algorithm for enumerating the relevant constraints in double-exponential
time.

In the case where $\phi_{i}$ are projections, however, a simple,
closed-form listing of the constraints is given by Theorem 6.6 of
\cite{CDK+13}, which states that it suffices to check that the inequality
$\sum_{i=1}^{n}s_{i}\rank(\phi_{i}(H))\ge\rank(H)$ holds for all
$H$ in the set of subgroups $\{e_{1},...,e_{d}\}$, where $e_{i}$
is the subgroup comprised of all vectors with zero entries at all
indices except for $i$.

Therefore, this LP reduces to:
\begin{eqnarray}
\min\sum s_{j}\ \st\label{eq:lp_largeindex_projective}\\
1 & \le & \sum_{j\text{ s.t. }\supp(\phi_{j})\ni i}s_{j}\qquad\forall i\in[1..d]\nonumber 
\end{eqnarray}

Thinking of the $\phi_{i}$ as 0-1 vectors with $1$s in the indices
contained in its support, and letting $\boldsymbol{\vec{s}}$ denote
the vector $[s_{1},...,s_{n}]^{T}$, we can rewrite the linear program
as (omitting nonnegativity constraints) as follows: minimize $\ov^{T}\sv$
subject to:
\begin{equation}
\begin{bmatrix}\vert &  & \vert\\
\phi_{1} & \cdots & \phi_{n}\\
\vert &  & \vert
\end{bmatrix}\sv\ge\ov\ .\label{eq:hbl_largeindex_matrix_constraints}
\end{equation}

The solution to this linear program, which we denote $k_{HBL}$, immediately
gives us the communication lower bound $\prod_{i}L_{i}/M^{k_{HBL}-1}$.

Now that we have a lower bound, we would like to find an actual tiling
that attains it in order to show that it is tight. Let us ansatz\textbf{
}(following Loomis-Whitney, etc.) that the optimal tile is a hyperrectangle
of dimensions $b_{1}\times...\times b_{d}$, where the $b_{i}$ are
constants which we wish to determine. We wish to select a tile whose
volume (that is, $\prod_{i\in\{1..d\}}b_{i}$) is as large as possible,
but we are subject to memory limitations: the subsets of each array
that are used must fit in cache. Since the subsets of array $A_{i}$
required to complete the operations in this hyperrectangle are of
size $\prod_{j\in\supp(\phi_{i})}b_{j}$, we obtain the constraint
(again, ignoring constant factors) $\prod_{j\in\supp(\phi_{i})}b_{j}\le M$.
Taking logs base $M$ and letting $\lambda_{i}$ denote $\log_{M}b_{i}$,
we obtain the following linear program: maximize $\ov^{T}[\lambda_{1},...,\lambda_{d}]$
subject to:
\begin{equation}
\begin{bmatrix}- & \phi_{1} & -\\
 & \vdots\\
- & \phi_{n} & -
\end{bmatrix}\begin{bmatrix}\lambda_{1}\\
...\\
\lambda{}_{d}
\end{bmatrix}\le\ov\ .\label{eq:largealglp}
\end{equation}

Taking the dual gives us (\ref{eq:hbl_largeindex_matrix_constraints}),
which implies that this tiling obtains the lower bound.

Notice that we did not encode the constraint that $b_{i}\le L_{i}$
in this linear program. Although this does not change the result when
$L_{i}$ is assumed to be very large, this does not always hold, and
the lower bound computed by \ref{eq:hbl_largeindex_matrix_constraints}
is not always tight. In the following section, we modify this approach
to give tight lower bounds for arbitrarily-sized inputs.

\global\long\def\rank{\text{{rank}}}%

\global\long\def\supp{\text{{supp}}}%

\global\long\def\st{\text{{\text{ s.t. }}}}%

\section{The Lower Bound}

\label{sec:The-Lower-Bound}

\subsection{One Small Index}

\label{sec:HBL-Setup,-One}

We will start our approach to small loop bounds by considering the
case when all loop indices but one are assumed to be bounded by arbitrarily
large values. Our approach will be to (a) find an upper bound for
a tile restricted to single ``slice'' of the iteration space formed
by fixing the loop index with a small bound, (b) calculate an upper
bound for the entire tile by summing individual slice bounds together
over all possible values of the same index, and (c) divide the total
number of operations by the aforementioned quantity to achieve a communication
lower bound.

Let us first consider the case where a single loop bound - say, $L_{1}$,
the upper bound on $x_{1}$ - is small, and the others are large.
We may assume without loss of generality that $L_{1}\le M$; if the
opposite is true, then $L_{1}$ would be large enough for the analysis
of Section \ref{sec:allarge} to apply, as any tile whose memory footprint
is at most $M$ would fit in the $L_{1}$ dimension. Furthermore,
suppose without loss of generality that $\phi_{1},...,\phi_{p}$ (for
some integer $p$) all contain $x_{1}$ and $\phi_{p+1},...,\phi_{n}$
do not. We will now find a communication lower bound for the subset
of instructions whose $x_{1}$ index is fixed (since the loop bounds
are constant and therefore independent of $x_{1}$, the result is
the same for all possible values of $x_{1}$).

Let $\phi'_{1},...,\phi'_{p}$ be the functions with $x_{1}$ removed.
For instance, if $\phi_{1}=(x_{1},x_{2},x_{3})$, then $\phi'_{1}=(x_{2},x_{3})$.
A communication lower bound for a single ``slice'' of operations
with $x_{1}$ fixed can be found by using LP \ref{eq:lp_largeindex_projective},
with the $\phi$ replaced with $\phi'$, to compute an upper bound
for the max tile size...

\begin{eqnarray*}
\min\sum\hat{s}_{j}\ \st\\
1 & \le & \sum_{j\text{ s.t. }\supp(\phi'_{j})\ni i}\hat{s}_{j}\qquad\forall i\in[1..d]
\end{eqnarray*}

This amounts to removing the first row in the constraint matrix of
the LP (\ref{eq:hbl_largeindex_matrix_constraints}):

\[
\begin{bmatrix}\multicolumn{3}{c}{\text{[remove first row]}}\\
\vert &  & \vert\\
\phi_{1} & \cdots & \phi_{n}\\
\vert &  & \vert
\end{bmatrix}\begin{bmatrix}\hat{s}_{1}\\
\vdots\\
\hat{s}_{d}
\end{bmatrix}\ge\ov\ 
\]

To find a upper bound for the size of a tile, we sum over the upper
bounds for the size each of its slices, each of which corresponds
to a single value of $x_{1}$. Let $\phi_{1}\brokenvert_{x_{1}=k},...,\phi_{n}\brokenvert_{x_{1}=k}$
be the functions with $x_{1}$ fixed to $k$. Then, the maximum tile
size is found by maximizing the following quantity (with $V$ representing
the tile):
\begin{eqnarray}
\sum_{i\in[L_{1}]}\vert\phi_{1}\brokenvert_{x_{1}=i}(V)\vert^{\hat{s}_{1}}\dots\vert\phi_{n}\brokenvert_{x_{1}=i}(V)\vert^{\hat{s}_{n}} & = & M^{\sum_{i\in[p+1,n]}\hat{s}_{i}}\sum_{i\in[L_{1}]}\vert\phi_{1}\brokenvert_{x_{1}=i}(V)\vert^{\hat{s}_{1}}\dots\vert\phi_{p}\brokenvert_{x_{1}=i}(V)\vert^{\hat{s}_{p}}\label{eq:LagrangeSingleVarObjective}
\end{eqnarray}
subject to:
\begin{equation}
\sum_{i\in\left[L_{1}\right]}\vert\phi_{j}\brokenvert_{x_{1}=i}(V)\vert\le M\qquad\forall j\in[p]\ .\label{eq:lagrangesinglevar}
\end{equation}

We bound (\ref{eq:LagrangeSingleVarObjective}) subject to the constraints
(\ref{eq:lagrangesinglevar}), and compute the maximum tile size,
as follows:
\begin{lem}
\label{lem:1dslicingdistro}The maximum tile size for a tile $V$,
subject to the constraints that (a) that $\phi_{i}(V)\le M$ for all
$i$ and (b) the set of all distinct $x_{1}$-coordinates of elements
of $V$ is at most $L_{1}$ in cardinality (i.e. the tile fits inside
the loop bounds), is bounded above by $M^{\kappa}$, where
\[
\kappa=\max\left\{ \sum_{i=1}^{n}\hat{s}_{i}+\beta_{1}\left(1-\sum_{i=1}^{p}\hat{s}_{i}\right),\sum_{i=1}^{n}\hat{s}_{i}\right\} \ .
\]
\end{lem}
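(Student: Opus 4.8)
The plan is to slice the tile $V$ along the small coordinate $x_1$, bound each slice with the discrete Brascamp--Lieb inequality using the exponents $\hat s$, and then treat the sum of the slice bounds as a continuous optimization problem in the numbers $|\phi'_j(V_i)|$. Concretely, I would write $V_i=\{v\in V:v_1=i\}$ for $i\in[L_1]$, so $|V|=\sum_{i\in[L_1]}|V_i|$. Since $\hat s$ is by construction a feasible point of the reduced LP (the LP (\ref{eq:lp_largeindex_projective}) with the $x_1$-row deleted) and all index functions are projections, Theorem 6.6 of \cite{CDK+13} guarantees that $\hat s$ satisfies exactly the rank inequalities needed to apply the discrete Brascamp--Lieb inequality to the sliced system, giving $|V_i|\le\prod_{j\le p}|\phi'_j(V_i)|^{\hat s_j}\prod_{j>p}|\phi_j(V_i)|^{\hat s_j}$. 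For $j>p$ we have $x_1\notin\supp(\phi_j)$, so $|\phi_j(V_i)|\le|\phi_j(V)|\le M$ by hypothesis (a), which lets me factor out $M^{\sum_{j>p}\hat s_j}$. For $j\le p$, the coordinate $x_1$ is one of the outputs of $\phi_j$, so $\phi_j(V)$ is the disjoint union over $i$ of $\{i\}\times\phi'_j(V_i)$, and hence $\sum_{i\in[L_1]}|\phi'_j(V_i)|=|\phi_j(V)|\le M$.

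Setting $a_{ij}=|\phi'_j(V_i)|$, I would reduce the lemma to the purely analytic claim that $\sum_{i\in[L_1]}\prod_{j\le p}a_{ij}^{\hat s_j}\le\max\{L_1^{\,1-\sigma},1\}\cdot M^{\sigma}$, where $\sigma:=\sum_{j\le p}\hat s_j$, over all nonnegative reals $a_{ij}$ with $\sum_i a_{ij}\le M$ for each $j\le p$ (dropping the combinatorial constraints only enlarges the feasible set). This splits into two cases. When $\sigma\le1$, I would adjoin a dummy index of weight $1-\sigma$ and constant value $1$ and apply the generalized H\"older inequality, which yields $\sum_i\prod_j a_{ij}^{\hat s_j}\le L_1^{\,1-\sigma}\prod_{j\le p}(\sum_i a_{ij})^{\hat s_j}\le L_1^{\,1-\sigma}M^{\sigma}$. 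When $\sigma\ge1$, weighted AM--GM gives $\prod_{j\le p}a_{ij}^{\hat s_j}\le(\sum_{j\le p}\tfrac{\hat s_j}{\sigma}a_{ij})^{\sigma}$; writing $c_i=\tfrac1M\sum_{j\le p}\tfrac{\hat s_j}{\sigma}a_{ij}$, one checks $c_i\in[0,1]$ and $\sum_i c_i\le1$, so $\sum_i c_i^{\sigma}\le\sum_i c_i\le1$ and therefore $\sum_i\prod_j a_{ij}^{\hat s_j}\le M^{\sigma}$. Putting $\beta_1=\log_M L_1$, the first case gives $|V|\le M^{\sum_{j=1}^n\hat s_j+\beta_1(1-\sigma)}$ and the second gives $|V|\le M^{\sum_{j=1}^n\hat s_j}$; since $\beta_1(1-\sigma)$ is nonnegative precisely when $\sigma\le1$, in both cases the bound equals $M^{\kappa}$ with $\kappa$ the maximum in the statement.

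I expect the first paragraph — tracking which coordinates survive the slicing and the disjoint-union identity $\sum_i|\phi'_j(V_i)|=|\phi_j(V)|$ — to be routine, as is the relaxation to a continuous problem. The step that requires care is the case $\sigma\ge1$: the real content is that the optimal configuration pours the entire budget $M$ into a single slice, and the AM--GM-followed-by-$c^{\sigma}\le c$ argument is the clean way to see this while pinning down the constant exactly (in particular with no residual dependence on $L_1$). A secondary point worth spelling out is the justification that $\hat s$, being feasible for the reduced LP, is a legitimate exponent vector for the discrete Brascamp--Lieb inequality on the sliced loop nest — this is exactly where the projective hypothesis and Theorem 6.6 of \cite{CDK+13} are used.
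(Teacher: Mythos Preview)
Your argument is correct and is, for this lemma in isolation, cleaner than the paper's. Both proofs start from the same reduction (slice along $x_1$, apply the reduced Brascamp--Lieb bound per slice, and reduce to maximizing $\sum_i\prod_{j\le p}a_{ij}^{\hat s_j}$ subject to $\sum_i a_{ij}\le M$), but they diverge at the optimization step. The paper attacks the maximization with Lagrange multipliers: it shows that at any stationary point the nonzero $y_{i,j}$ are equal to a constant $M/\vartheta$ for some $\vartheta\in[1,L_1]$, and then optimizes over $\vartheta$ to get the two cases. You instead bound the sum directly --- H\"older with a dummy factor when $\sigma\le 1$, and weighted AM--GM followed by $c_i^{\sigma}\le c_i$ when $\sigma\ge 1$ --- which avoids calculus entirely and pins down the constant without any structural detour.

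The trade-off is that the paper's route is doing double duty. The Lagrange analysis does not just yield the bound; it yields the \emph{shape} of the maximizer (uniform over some support set $\mathscr{S}$), and the paper explicitly reuses this structural fact in the inductive proof of Theorem~\ref{lem:slicing}, where it flattens the $q$ small indices into a single sum and invokes ``the same result applies here'' to conclude the optimum is uniform over some $\mathscr{S}\subseteq[L_1]\times\cdots\times[L_q]$, before arguing $\mathscr{S}$ must be a rectangle. Your H\"older/AM--GM bounds prove Lemma~\ref{lem:1dslicingdistro} efficiently but do not immediately hand you that structural description, so if you intend to continue to the multi-index case along the paper's lines you would need either to recover the uniform-support characterization separately or to find an inductive H\"older-style argument that bypasses it.
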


\begin{proof}
There are three cases:
\begin{enumerate}
\item If $\sum_{i\in[p]}\hat{s}_{i}<1$, the maximum of the quantity (\ref{eq:LagrangeSingleVarObjective})
is achieved when we distribute the weight across terms in the sum,
i.e. for all $j\in[1..p]$, let $\vert\phi_{j}\brokenvert_{x_{1}=i}(V)\vert=M/L_{1}$
for all $i\in[1..L_{1}]$, which leads to a tile size of $M^{\kappa}$
where 
\begin{equation}
\kappa\coloneqq\sum_{i=1}^{n}\hat{s}_{i}+\beta_{1}\left(1-\sum_{i=1}^{p}\hat{s}_{i}\right)\label{eq:sle1}
\end{equation}
 and $\beta_{1}=\log_{M}L_{1}$.
\item If $\sum_{i\in[p]}\hat{s}_{i}>1,$the maximum is achieved when we
concentrate the entire weight into one term of the sum (i.e. for all
$j\in[1..p]$, let $\vert\phi_{j}\brokenvert_{x_{1}=i'}(V)\vert=M$
for some $i'$ and let $\vert\phi_{j}\brokenvert_{x_{1}=i}(V)\vert=0$
for $i\ne i'$), which leads to a tile size of $M^{\kappa}$ where
\begin{equation}
\kappa\coloneqq\sum_{i=1}^{n}\hat{s}_{i}\ .\label{eq:sge1}
\end{equation}
\item If $\sum_{i\in[p]}\hat{s}_{i}=1$, then both (\ref{eq:sle1}) and
(\ref{eq:sge1}) are equal. Furthermore, since the only difference
between $\hat{s}$ and $s$ is that the latter must satisfy the additional
constraint $\sum_{i\in\{1..p\}}s_{i}\ge1$ in the constraint (which
is satisfied in this case by $\hat{s}$ as well), we get an upper
bound of $M^{\sum_{i=1}^{n}\hat{s}_{i}}=M^{\sum_{i=1}^{n}s_{i}}$
immediately from (\ref{eq:lp_largeindex_projective}).
\end{enumerate}
For convenience, denote $\vert\phi_{i}\brokenvert_{x_{1}=x'_{i}}(V)\vert$,
the slice of $V$ corresponding to $x_{1}'$, as $y_{i,x'_{1}}$.
We want to maximize
\[
\sum_{x_{1}=1}^{L_{1}}y_{1,x_{1}}^{\hat{s}_{1}}\dots y_{p,x_{1}}^{\hat{s}_{p}}
\]
subject to
\[
\sum_{x_{1}=1}^{L_{1}}y_{i,x_{1}}-M\le0\qquad\forall i\in[p]\ .
\]

Without loss of generality, assume all the $\hat{s}_{i}$ are positive;
if $\hat{s}_{i}=0$, then we can remove $y_{i,x_{1}}$ from both the
statement of the maximization problem (e.g. by setting it to $1$
for all $x_{i}$) and from the quantities (\ref{eq:sle1}) and (\ref{eq:sge1})\textbf{
}without affecting the rest of the proof.

Since any slack in any one of the above inequalities can be removed
by increasing one of the $y_{i,x_{i}}$, and doing so will only increase
the quantity we're trying to maximize, we can take these inequalities
to be equalities. The Lagrange multipliers for this problem are:
\begin{eqnarray*}
\mathcal{L} & = & \sum_{x_{1}=1}^{L_{1}}y_{1,x_{1}}^{\hat{s}_{1}}\dots y_{p,x_{1}}^{\hat{s}_{p}}\\
 &  & -\lambda_{1}\left(\sum_{x_{1}=1}^{L_{1}}y_{1,x_{1}}-M\right)\\
 &  & \vdots\\
 &  & -\lambda_{p}\left(\sum_{x_{1}=1}^{L_{1}}y_{p,x_{1}}-M\right)\ .
\end{eqnarray*}
Setting the gradient (with respect to both $y_{i,j}$ and $\lambda_{i}$)
to $0$, and looking at the derivative with respect to $y_{i,j}$,
we get:
\begin{eqnarray}
\hat{s}_{i}y_{1,j}^{\hat{s}_{1}}...y_{i-1,j}^{\hat{s}_{i-1}}y_{i,j}^{\hat{s}_{i}-1}y_{i+1,j}^{\hat{s}_{i+1}}...y_{p,j}^{\hat{s}_{p}} & = & \lambda_{i}\ .\label{eq:lagrangemultderiv}
\end{eqnarray}

These equations are invariant in $j$: that is, no matter which value
$j$ we fix $x_{1}$ to, the set of equations that $y_{i,j}$ must
satisfy are identical (this intuitively follows from symmetry across
the $x_{i}$).

As a result, we may assume $\lambda_{i}\ne0$; if it is in fact zero,
then the quantity we're trying to maximize would be zero, which clearly
cannot be the case since we can construct a tile containing only one
element (i.e. with our objective being $1$) that satisfies all the
constraints of the maximization problem.

In particular, $\lambda_{i}y_{i,j}/\hat{s}_{i}=y_{1,j}^{\hat{s}_{1}}...y_{p,j}^{\hat{s}_{p}}$
must remain invariant as $i$ varies (with a fixed $j$), which implies
that for any $i_{1},i_{2},j$, 
\[
\frac{\lambda_{i_{1}}y_{i_{1},j}}{\hat{s}_{i_{1}}}=\frac{\lambda_{i_{2}}y_{i_{2},j}}{\hat{s}_{i_{2}}}
\]
implying that the ratio between $y_{i,j}$ for two different values
of $i$ is independent of the $j$ (i.e. slice) we choose, remaining
fixed at
\[
\frac{y_{i_{1},j}}{y_{i_{2},j}}=\frac{\lambda_{i_{2}}\hat{s}_{i_{1}}}{\lambda_{i_{1}}\hat{s}_{i_{2}}}
\]

Therefore, the point we're trying to solve for satisfies this relationship:
\begin{equation}
y_{i,j}=\frac{\lambda_{1}\hat{s}_{i}}{\lambda_{i}\hat{s}_{1}}y_{1,j}\label{eq:yijratio}
\end{equation}
For any given $j$, one of two cases must hold: either $y_{i,j}=0$
for all $i$ (in which case the tile does not intersect at all with
the slice $x_{1}=j$) or all $y_{i,j}$ are nonzero, and we can substitute
(\ref{eq:yijratio}) into (\ref{eq:lagrangemultderiv}) to get:

\begin{eqnarray*}
\frac{\lambda_{i}}{\hat{s}_{i}} & = & y_{1,j}^{\hat{s}_{1}}...y_{i-1,j}^{\hat{s}_{i-1}}y_{i,j}^{\hat{s}_{i}-1}y_{i+1,j}^{\hat{s}_{i+1}}...y_{p,j}^{\hat{s}_{p}}\\
 & = & \frac{\prod_{k=1}^{p}y_{k,j}^{\hat{s}_{k}}}{y_{i,j}}\\
 & = & \frac{\prod_{k=1}^{p}\left(\frac{\lambda_{1}\hat{s}_{k}}{\lambda_{k}\hat{s}_{1}}y_{1,j}\right)^{\hat{s}_{k}}}{\frac{\lambda_{1}\hat{s}_{i}}{\lambda_{i}\hat{s}_{1}}y_{1,j}}\\
 & = & y_{1,j}^{-1+\sum_{k}\hat{s}_{k}}\frac{\lambda_{i}\hat{s}_{1}}{\lambda_{1}\hat{s}_{i}}\prod_{k=1}^{p}\left(\frac{\lambda_{1}\hat{s}_{k}}{\lambda_{k}\hat{s}_{1}}\right)^{\hat{s}_{k}}\\
 & = & y_{1,j}^{-1+\sum_{k}\hat{s}_{k}}\frac{\lambda_{i}\hat{s}_{1}}{\lambda_{1}\hat{s}_{i}}\left(\frac{\lambda_{1}}{\hat{s}_{1}}\right)^{\sum_{k}\hat{s}_{k}}\prod_{k=1}^{p}\left(\frac{\hat{s}_{k}}{\lambda_{k}}\right)^{\hat{s}_{k}}
\end{eqnarray*}
 Canceling $\frac{\lambda_{i}}{\hat{s}_{i}}$ from both sides, and
moving the first term in the last expression over to the left, we
get
\begin{eqnarray*}
y_{1,j}^{1-\sum_{k}\hat{s}_{k}} & = & \left(\frac{\lambda_{1}}{\hat{s}_{1}}\right)^{\sum_{k}\hat{s}_{k}-1}\prod_{k=1}^{p}\left(\frac{\hat{s}_{k}}{\lambda_{k}}\right)^{\hat{s}_{k}}\ .
\end{eqnarray*}

We may assume that $1-\sum_{k=1}^{p}\hat{s}_{k}$ is nonzero, as the
case when it is zero is covered by case (3) above. Therefore, since
the right hand side is independent of $j$, it follows that all nonzero
values of $y_{1,j}$ are equal. Since $y_{1,j}$ determines the value
of $y_{i,j}$ for all $i$ via (\ref{eq:yijratio}), it follows that
each $y_{i,j}$ must either be (a) equal to some nonzero constant
independent of $j$ or (b) be equal to zero, if and only if all $y_{i',j}$
for the same $j$ must also be zero.

Let the number of $j$ such that $y_{1,j}\ne0$ be $\vartheta$, which
must fall between $1$ and $L_{1}$ inclusive (since the number of
slices is at most equal to the loop bound corresponding to the dimension
we're summing over). Therefore, in order to satisfy (a), the remaining
$y_{i,j}$ must be equal to
\[
y_{i,j}=\frac{M}{\vartheta}\ .
\]

Substituting this into (\ref{eq:LagrangeSingleVarObjective}), we
get that the max tile size is:
\[
M^{\sum_{i\in[p+1,n]}\hat{s}_{i}}\vartheta\prod_{i=1}^{p}\left(\frac{M}{\vartheta}\right)^{\hat{s}_{i}}=M^{\sum_{i=1}^{n}\hat{s}_{i}}\vartheta^{1-\sum_{i=1}^{p}\hat{s}_{i}}
\]
so the log (base $M)$ of tile size is:
\[
\sum_{i=1}^{n}\hat{s}_{i}+\left(\log_{M}\vartheta\right)\left(1-\sum_{i=1}^{p}\hat{s}_{i}\right)\ .
\]
Therefore, either $1-\sum_{i=1}^{p}\hat{s}_{i}$ is positive, in which
case the maximum occurs when we set $\vartheta$ to $L_{1}$, giving
(recall that $\beta_{1}=\log_{M}L_{1}$):
\[
\sum_{i=1}^{n}\hat{s}_{i}+\beta_{1}\left(1-\sum_{i=1}^{p}\hat{s}_{i}\right)\ ,
\]
or $1-\sum_{i=1}^{p}\hat{s}_{i}$ is negative, in which case the maximum
occurs at $\vartheta=1$, in which case we get 
\[
\sum_{i=1}^{n}\hat{s}_{i}\ .
\]
as desired.
\end{proof}

\subsection{Multiple small bounds}

We now generalize the proof Section \ref{sec:HBL-Setup,-One}\textbf{
}to the case where multiple loop bounds are taken to be small. 

Suppose that the loops indexed by $x_{i}$ have bounds $L_{i}$. Let
$R_{j}\subseteq\{1..n\}$ denote the set of indices $i$ such that
$\supp(\phi_{i})$ contains $x_{j}$.

As before, our approach considers the communication lower bound for
a ``slice'' - that is, a subset of the iteration polytope formed
by restricting certain loop indices to fixed values - and summing
these slice lower bounds over all possible values of the fixed indices.
This time, however, each slice will be formed by simultaneously fixing
multiple indices, which we assume without loss of generality are $x_{1}$
through $x_{q}$ (the following argument holds for any $q$, and is
independent of the actual value of $q$). As was the case in the single-variable
case, an upper bound on max tile size for a single slice is given
by $M^{\sum_{j\in\{1..n\}}\hat{s}_{j}},$where $\hat{s}_{j}$ are
any nonnegative numbers that satisfy: 
\begin{eqnarray}
1 & \le & \sum_{j\text{ s.t. }\supp(\phi'_{j})\ni x_{i}}\hat{s}_{j}\label{eq:croppedmulti}
\end{eqnarray}
where $\phi'_{j}$ now corresponds to removing $x_{1},...,x_{q}$
from $\phi_{j}$ (or, alternatively, chopping off the first $q$ rows
of the HBL LP constraint matrix (\ref{eq:hbl_largeindex_matrix_constraints})).We
now develop an analog to Lemma \ref{lem:1dslicingdistro} in order
to maximize the sum of the slices over $\{x_{1},...,x_{q}\}\in\{1..L_{1}\}\times...\times\{1..L_{q}\}$.
Our main result is as follows:
\begin{thm}
\label{lem:slicing} Let $q\in[1..d]$, and $\hat{s}_{i}$ be any
nonnegative numbers satisfying 
\[
1\le\sum_{j\text{ s.t. }\supp(\phi'_{j})\ni x_{i}}\hat{s}_{j}
\]
where $\phi'_{j}$ is obtained by removing $x_{1},...,x_{q}$ from
$\phi_{j}$. Then $M^{k}$, where 
\[
k=\sum_{i=1}^{n}\hat{s}_{i}+\sum_{j\in[q]\st\sum_{i\in R_{j}}\hat{s}_{i}\le1}\left[\beta_{j}\left(1-\sum_{i\in R_{j}}\hat{s}_{i}\right)\right]
\]
represents an upper bound on the tile size.
\end{thm}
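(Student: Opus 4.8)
The plan is to reduce the multivariable case to iterated application of the single-variable Lemma~\ref{lem:1dslicingdistro}, peeling off the small indices $x_1,\dots,x_q$ one at a time. First I would set up the optimization problem directly: writing $y_{i,\,(j_1,\dots,j_q)} = |\phi_i\brokenvert_{x_1=j_1,\dots,x_q=j_q}(V)|$ for the size of the footprint of array $A_i$ on the slice indexed by $(j_1,\dots,j_q)$, the tile size is $\sum_{(j_1,\dots,j_q)} \prod_{i\in[n]} y_{i,\,(j_1,\dots,j_q)}^{\hat s_i}$ (with the convention that $\phi_i$ not depending on any of $x_1,\dots,x_q$ contributes a factor bounded by $M^{\hat s_i}$, exactly as in the single-variable objective~(\ref{eq:LagrangeSingleVarObjective})), and the memory constraints are that for each array $A_i$ and each fixing of the indices \emph{not} in $\supp(\phi_i)\cap\{x_1,\dots,x_q\}$, the sum of $y_{i,\cdot}$ over the remaining free small indices is at most $M$. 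The key structural observation is that these constraints decouple across the small indices: the constraint coming from array $A_i$ only couples the slices along the small indices that actually appear in $\supp(\phi_i)$, i.e. the indices $x_j$ with $i\in R_j$.

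The main step is an inductive argument on $q$. Fix the values of $x_2,\dots,x_q$ and view the objective as a sum over $x_1\in[L_1]$; this is exactly the situation of Lemma~\ref{lem:1dslicingdistro}, applied with the role of "$\phi'$" played by the functions with $x_1$ \emph{and} $x_2,\dots,x_q$ removed, and with the role of the exponents $\hat s_i$ played by the same $\hat s_i$ (which by hypothesis satisfy the cropped constraint~(\ref{eq:croppedmulti}), hence a fortiori the constraint with only $x_1$ additionally removed). Lemma~\ref{lem:1dslicingdistro} then says that, after optimizing over the $x_1$-direction, each inner slice (now indexed by $x_2,\dots,x_q$) contributes at most $M^{\kappa_1}$ where $\kappa_1 = \sum_i \hat s_i^{(1)} + [\,\beta_1(1-\sum_{i\in R_1}\hat s_i)\,]_+$ with $\hat s^{(1)}$ being the exponents relevant to the $(q-1)$-fold sliced problem — but crucially the \emph{same} $\hat s_i$ continue to work because deleting a row from the constraint matrix only weakens the constraints. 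Iterating: after peeling $x_1$ we have a $(q-1)$-dimensional slicing problem with the same $\hat s_i$; peeling $x_2$ contributes $[\,\beta_2(1-\sum_{i\in R_2}\hat s_i)\,]_+$; and so on. Summing the contributions $\sum_{j=1}^q [\,\beta_j(1-\sum_{i\in R_j}\hat s_i)\,]_+$ on top of the base term $\sum_i \hat s_i$ (the max tile size for the fully-sliced problem, from LP~(\ref{eq:lp_largeindex_projective})) gives exactly the claimed exponent $k$, since $\sum_{j\in[q]\,:\,\sum_{i\in R_j}\hat s_i\le 1} \beta_j(1-\sum_{i\in R_j}\hat s_i) = \sum_{j=1}^q [\,\beta_j(1-\sum_{i\in R_j}\hat s_i)\,]_+$.

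The main obstacle — and the place where the argument needs genuine care rather than bookkeeping — is justifying that the peeling is \emph{lossless}, i.e. that an optimum of the full $q$-dimensional problem is attained at a product-form configuration in which each $y_{i,\,(j_1,\dots,j_q)}$ depends only on whether each coordinate is "active," so that the one-dimensional lemma can be applied slice-by-slice without the inner optimization interfering with the outer one. In the single-variable proof this came out of the Lagrange-multiplier analysis showing all nonzero $y_{i,j}$ are equal; here I would argue that the same stationarity conditions, applied in each coordinate direction separately, force $y_{i,\,(j_1,\dots,j_q)} = M/(\vartheta_1\cdots\vartheta_{q'})$ on its support, where $\vartheta_j\in[1,L_j]$ is the number of active values of $x_j$ seen by array $A_i$ and $q'$ is the number of small indices in $\supp(\phi_i)$; then the objective becomes $\prod_j \vartheta_j^{\,\cdot} \cdot M^{\sum\hat s_i}\prod_j \vartheta_j^{-\sum_{i\in R_j}\hat s_i}$, whose logarithm separates additively over $j$, and one optimizes each $\vartheta_j$ independently exactly as in the single-variable case ($\vartheta_j = L_j$ if $1-\sum_{i\in R_j}\hat s_i>0$, else $\vartheta_j=1$). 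The boundary case $\sum_{i\in R_j}\hat s_i = 1$ for some $j$ is handled as in case~(3) of Lemma~\ref{lem:1dslicingdistro}, contributing $0$ and consistent with its omission from the sum. An alternative, slightly slicker route would be to run the Lagrange computation once in all the small variables simultaneously rather than inductively, but the inductive framing makes the reuse of Lemma~\ref{lem:1dslicingdistro} cleanest and least error-prone.
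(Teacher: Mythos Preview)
Your overall plan matches the paper's: set up the $q$-variable slice-sum, use the Lagrange analysis from Lemma~\ref{lem:1dslicingdistro} to reduce to a configuration where the $y_{i,x}$ are constant on some support $\mathscr{S}\subseteq[L_1]\times\cdots\times[L_q]$, and then argue the support is a product $\vartheta_1\times\cdots\times\vartheta_q$ so the objective separates. You also correctly flag the product-form step as the crux. The gap is that you do not actually establish it.

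Two concrete issues. First, the ``peel off $x_1$'' induction does not go through as written: when you fix $x_2,\dots,x_q$ and sum over $x_1$, the memory constraint for an array $A_i$ whose support contains several of the small indices is \emph{not} $\sum_{x_1} y_{i,(x_1,x_2,\dots,x_q)}\le M$ for each fixed $(x_2,\dots,x_q)$; the budget $M$ is shared across all values of the other small indices in $\supp(\phi_i)$. So Lemma~\ref{lem:1dslicingdistro} does not apply slice-by-slice with right-hand side $M$, and iterating it as you describe over-counts the budget. Second, your fallback---``stationarity conditions, applied in each coordinate direction separately, force $y_{i,(j_1,\dots,j_q)}=M/(\vartheta_1\cdots\vartheta_{q'})$''---asserts precisely the conclusion that needs proof. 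The flattened Lagrange computation (which the paper does reuse from Lemma~\ref{lem:1dslicingdistro}) only yields that the nonzero $y_{i,x}$ take a common value $M/|\mathscr{S}_i|$ on some support $\mathscr{S}$; it says nothing about the \emph{shape} of $\mathscr{S}$. An L-shaped $\mathscr{S}$ is perfectly consistent with the first-order conditions. The paper closes this gap with a separate combinatorial argument (the ``Claim'' that $\mathscr{S}$ is a rectangle): given $x',x''\in\mathscr{S}$ and the ``corner'' $x^*\notin\mathscr{S}$ agreeing with $x'$ except in one coordinate where it matches $x''$, one exhibits a strict improvement to $|\mathscr{S}|/\prod_i|\mathscr{S}_i|^{\hat s_i}$ via a case analysis on how each $\phi_i$ identifies or separates $x',x'',x^*$. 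The induction on $q$ in the paper is invoked only inside one sub-case of that analysis (when $\sum_{i\in R_{j^*}}\hat s_i\ge 1$, collapsing $x_{j^*}$), not as an outer ``peel one variable'' loop. Your proposal would need either that rectangle argument or a genuine substitute for it.
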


Notice that this theorem holds for all possible $q$, as well as reorderings
of the variables. As a result, this lemma in fact generates $2^{d}$
separate upper bounds for tile size (one for each subset $\mathscr{Q}$
of indices that we hold to be small). Therefore, the smallest upper
bound on tile size (which corresponds to the largest lower bound on
communication) we can achieve in this manner is $M^{\hat{k}}$ for
\[
\hat{k}=\min_{\mathscr{Q}\subseteq[d]}\sum_{i=1}^{n}\hat{s}_{\mathscr{Q},i}+\sum_{j\in\mathscr{Q}\st\sum_{i\in R_{j}}\hat{s}_{i}\le1}\left[\beta_{j}\left(1-\sum_{i\in R_{j}}\hat{s}_{\mathscr{Q},i}\right)\right]
\]
where $\hat{s}_{\mathscr{Q},i}$ is the solution to the HBL LP (\ref{eq:hbl_largeindex_matrix_constraints})
with the rows indexed by elements of $\mathscr{Q}$ removed.

\begin{proof}
By induction on $q$. The base case, for $q=1$, is simply Lemma \ref{lem:1dslicingdistro}.

Let $\hat{s}'_{i}$ be defined as $\hat{s}_{[q-1],i}$. Suppose for
induction that $M^{k}$, for 
\[
k=\sum_{i=1}^{n}\hat{s}'_{i}+\sum_{j\in[q-1]\st\sum_{i\in R_{j}}\hat{s}_{i}\le1}\left[\beta_{j}\left(1-\sum_{i\in R_{j}}\hat{s}'_{i}\right)\right]
\]
represents an upper bound on the tile size.

We start by finding an upper bound on the tile size, as before, by
summing over several ``slices'', each being defined as the subset
of the elements where $x_{1}$ through $x_{q}$ are set to fixed values.

We begin by generalizing the notion of slices to the case where multiple
indices may be small. As before, let $\phi_{i}\brokenvert_{\{\hat{x}_{1},...,\hat{x}_{q}\}}$
denote $\phi_{i}$ with $x_{j}$ fixed to $\hat{x}_{j}$ for all $j\in[q]$.
By definition, as $\phi_{i}$ only depends on indices in its support,
$\phi_{i}\brokenvert_{\{x_{1},...,x_{q}\}}$ must be identical to
$\phi_{i}\brokenvert_{\{x_{1},...,x_{q}\}\cap\supp(\phi_{i})}$.

We wish to maximize the size of the entire tile - that is, the sum
of all the sizes of the slices:

\[
\sum_{x_{1}\in[1..L_{1}],...,x_{q}\in[1..L_{q}]}\vert\phi_{1}\brokenvert_{\{x_{1},...,x_{q}\}}(V)\vert^{\hat{s}_{1}}\dots\vert\phi_{n}\brokenvert_{\{x_{1},...,x_{q}\}}(V)\vert^{\hat{s}_{n}}
\]
subject to the memory constraints
\[
\sum_{x_{k}\in[1..L_{k}]\text{ for }k\in[1..q]\cap\supp(\phi_{i})}\vert\phi_{i}\brokenvert_{\{x_{1},...,x_{q}\}}(V)\vert\le M\qquad\forall i\in\bigcup_{j\in[1..q]}R_{j}\ .
\]
As before, we will simplify our notation by defining $y_{j,\{x_{1},...,x_{q}\}}\coloneqq\vert\phi_{j}\brokenvert_{\{x_{1},...,x_{q}\}}(V)\vert$.
 Our optimization problem therefore can be rewritten as maximizing:

\begin{equation}
\sum_{x_{1}\in[1..L_{1}],...,x_{q}\in[1..L_{q}]}y_{1,\{x_{1},...,x_{q}\}}^{\hat{s}_{1}}\dots y_{n,\{x_{1},...,x_{q}\}}^{\hat{s}_{n}}\label{eq:multismallobjnointersect}
\end{equation}

subject to the constraints:
\begin{equation}
1\le\sum_{x_{k}\in[1..L_{k}]\text{ for }k\in[q]\cap\supp(\phi_{i})}y_{i,\{x_{1},...,x_{q}\}}\le M\qquad\forall i\in\bigcup_{j\in[q]}R_{j}\ .\label{multismallconstraints}
\end{equation}

The definition of $\phi_{i}\brokenvert_{\{\hat{x}_{1},...,\hat{x}_{q}\}}$
(and therefore of $y_{j,\{x_{1},...,x_{q}\}}$) requires us to further
impose an additional constraint on the solution: for all $i$, the
value of $y_{i,\{x_{1},...,x_{q}\}}$ must remain independent of indices
not in the support of $\phi_{i}$. Formally, if $x_{k}\notin\supp(\phi_{i})$,
then 
\begin{equation}
y_{i,\{x_{1},...x_{k-1},a,x_{k+1},...,x_{q}\}}=y_{i,\{x_{1},...x_{k-1,},b,x_{k+1},....,x_{q}\}}\label{eq:fixsuppconstraint}
\end{equation}
for any $a,b$. Our approach will be to find a candidate solution
which ignores this constraint, and then to show that this candidate
solution actually does satisfy (\ref{eq:fixsuppconstraint}) (i.e.
that this constraint is redundant).

Furthermore, in order to make it easier to reason about the constraints
(\ref{multismallconstraints}), we will multiply them all by the appropriate
values in order to ensure that the sum is over the same set of variables:
$x_{1}$ through $x_{q}$:

\begin{equation}
\prod_{j\in[q]\backslash\supp(\phi_{i})}L_{j}\le\sum_{x_{1}\in[1..L_{1}],...,x_{q}\in[1..L_{q}]}y_{i,\{x_{1},...,x_{q}\}}\le M\prod_{j\in[q]\backslash\supp(\phi_{i})}L_{j}\qquad\forall i\in\bigcup_{j\in[q]}R_{j}\ .\label{eq:multi_small_constraint}
\end{equation}

Since our goal is to find an upper bound on the tile size, which is
the result of this constrained maximization problem, we can remove
the lower bound constraints on $\sum_{x_{1}\in[1..L_{1}],...,x_{q}\in[1..L_{q}]}y_{i,\{x_{1},...,x_{q}\}}$
(i.e. the leftmost inequality in (\ref{eq:multi_small_constraint}))
without affecting correctness. 

The resulting problem is almost identical to that of Lemma \ref{lem:1dslicingdistro},
except with different limits (one may think of this 'flattening' the
$q$-dimensional tensor $x_{1},...,x_{q}$ into a single vector in
order to get a single sum as we did in the previous section). Recall
that none of the steps we used to compute the maximum in our proof
of Lemma \ref{lem:1dslicingdistro} actually used the value of the
right sides of the constraints, since all those constants were all
differentiated away as a constant factor when taking gradients; as
a result, the same result applies here. Specifically, the maximum
is obtained at a point specified as follows: select some subset $\mathscr{S}\subseteq\{1..L_{1}\}\times...\times\{1..L_{q}\}$
of integer tuples, which represent $x_{i}$-indices for which $y_{i,\{x_{1},..,x_{q}\}}$
will be nonzero. For each $\{x_{1},..,x_{q}\}$ in $\mathscr{S}$,
$y_{i,\{x_{1},..,x_{q}\}}$ must be equal to a constant value independent
of $\{x_{1},...,x_{q}\}$. In order to maximize (\ref{eq:multismallobjnointersect}),
we set constraint (\ref{multismallconstraints}) to obtain:

\begin{equation}
y_{i,\{x_{1},...,x_{q}\}}=\frac{M}{\vert\mathscr{S}_{i}\vert}\ \forall i\label{eq:plug_in_multi_small}
\end{equation}
where $\mathscr{S}_{i}$ is $\phi_{i}$ (restricted to $x_{1}...x_{q}$)
applied to $\mathscr{S}$.

For indices not in $\mathscr{S}$, set $y_{i,\{x_{1},...,x_{q}\}}$
to zero for all $i$. The resulting upper bound for tile size is
therefore:

\begin{eqnarray}
\sum_{x_{1},...,x_{q}\in\mathscr{S}}\prod_{i}\left(\frac{M}{\vert\mathscr{S}_{i}\vert}\right)^{\hat{s}_{i}} & = & \vert\mathscr{S}\vert\prod_{i}\left(\frac{M}{\vert\mathscr{S}_{i}\vert}\right)^{\hat{s}_{i}}\nonumber \\
 & = & \frac{\vert\mathscr{S}\vert}{\prod_{i}\vert\mathscr{S}_{i}\vert^{\hat{s}_{i}}}M^{\sum_{i}\hat{s}_{i}}\label{eq:multismallobjintermsofs}
\end{eqnarray}
where the first equality is a result of the independence of the summand
with $x$, with the number of nonzero terms in the sum being $\vert\mathscr{S}\vert$.

\textbf{Claim:} without loss of generality, we can assume that $\mathscr{S}$
is a rectangle; that is, it can be written as set $C_{1}\times\dots\times C_{q}$
for some sets $C_{i}\subseteq[L_{i}]$

\textbf{Proof of claim:} Suppose not. Then there exist points $x',x''\in\mathscr{S}$
such there exists some point $x^{*}\notin\mathscr{S}$, where each
$x_{j}^{*}$ is equal to $x_{j}'$ for all $j$ except a single value
$j^{*}$, at which it takes on the value of $x_{j}''$. To see why
this is true, take any two distinct $x',x''\in\mathscr{S}$, and repeatedly
change one component of $x'$ to match the corresponding component
of $x''$, stopping when either $x'=x''$, or $x'\notin\mathscr{S}$.
In the latter case, set $x^{*}=x'$, and let $x'$ denote its immediate
predecessor in this process. If we never end up with an $x^{*}$ for
any distinct pairs of $x'$ and $x''$ in $\mathscr{S}$, then $\mathscr{S}$
must be a rectangle.

Our goal will be to show that this configuration is suboptimal. Consider
the set of functions $\phi_{i}$ for $i\in R_{j^{*}}$, that is, the
set of functions containing $x_{j^{*}}$.

Let us consider the following categories, distinguished by how $\phi_{i}$
maps $x'$, $x''$, and $x^{*}$.
\begin{enumerate}
\item $\phi_{i}(x')=\phi_{i}(x^{*})\ne\phi_{i}(x'')$. Notice that replacing
$x''$ with $x^{*}$ in $\mathscr{S}$ either reduces $\left|\mathscr{S}_{i}\right|$
by one (if there is no other $x^{\dagger}$ such that $\phi_{i}(x^{\dagger})=\phi_{i}(x'')$)
or keeps it the same (if such an $x^{\dagger}$ exists); notice that
in the latter case, adding $x^{*}$ to $\mathscr{S}$ keeps $\left|\mathscr{S}_{i}\right|$
constant. We denote these cases (1a) and (1b) respectively.
\item $\phi_{i}(x'')=\phi_{i}(x^{*})\ne\phi_{i}(x')$. Analogously to case
(1), replacing $x'$ with $x^{*}$ either reduces $\mathscr{S}_{i}$
(case (2a)) or keeps it the same (case (2b)).
\item $\phi_{i}$ maps $x'$, $x''$, and $x^{*}$ to three distinct points.
\item $\phi_{i}(x')=\phi_{i}(x'')=\phi_{i}(x^{*})$.
\item $\phi_{i}(x')=\phi_{i}(x'')\ne\phi_{i}(x^{*})$. Notice that this
category must be empty: If $x_{j}''=x_{j}'$, then by definition this
quantity is also $x^{*}$; therefore, going to $*$ from $''$ can
only make the number of agreements better under any projection.
\end{enumerate}
In the above categories, $i$ satisfying (1) and (4) implies that
$i\notin R_{j^{*}}$, while $i$ satisfying (2) and (3) implies that
$i\in R_{j^{*}}$. We will show that $\mathscr{S}$ is suboptimal
by providing strict improvements on it.
\begin{enumerate}
\item If there are any $i$ in category (1a), we replace $x''$ with $x^{*}$
in $\mathscr{S}$, reducing $\left|\mathscr{S}_{i}\right|$. In order
to see how this change affects the values of $\mathscr{S}_{i}$ for
other $i$, we first note that for other $i\notin R_{j^{*}}$, $\phi_{i}(x')=\phi_{i}(x^{*})$,
so this change can only keep constant or decrease $\left|\mathscr{S}_{i}\right|$
for such $i$. For all $i$ in any of the other categories - (1b),
(2ab), (3), or (4) - $\left|\mathscr{S}_{i}\right|$ remains the same.
Therefore, as the value of $\left|\mathscr{S}_{i}\right|$ either
remains the same or decreases (with at least one strict decrease),
and $\left|\mathscr{S}\right|$ remains constant, we obtain a strict
increase in the value of (\ref{eq:multismallobjintermsofs}).
\item If some $i$ falling into category (3): Denote the set of $i$ such
that $\phi_{i}$ maps $x'$, $x''$, and $x^{*}$ onto different values
as $Q$. We will split into two cases, based on the values of $\sum_{i\in R_{j^{*}}}\hat{s}_{i}$: 
\begin{enumerate}
\item Suppose $\sum_{i\in R_{j^{*}}}\hat{s}_{i}\ge1$. Consider the assignment
to the $y_{i,x}$ given by $\mathscr{S}$; its objective (\ref{eq:multismallobjintermsofs})
is:
\[
\sum_{x_{1}\in[L_{1}],...,x_{j^{*}-1}\in[L_{j^{*}-1}],x_{j^{*}+1}\in[L_{j^{*}+1}],...,x_{q}\in[L_{q}]}\left(\sum_{x_{j^{*}}\in[L_{j^{*}}]}y_{1,\{x_{1},...,x_{q}\}}^{\hat{s}_{1}}\dots y_{n,\{x_{1},...,x_{q}\}}^{\hat{s}_{n}}\right)
\]
Factoring the innermost term into terms that are constant w.r.t. $x_{j^{*}}$
and those that are not, we can rewrite this as:
\[
\sum_{x_{1}\in[L_{1}],...,x_{j^{*}-1}\in[L_{j^{*}-1}],x_{j^{*}+1}\in[L_{j^{*}+1}],...,x_{q}\in[L_{q}]}\left(\prod_{i\in[n]\backslash R_{j}^{*}}y_{i,\{x_{1},...,x_{q}\}}^{\hat{s}_{i}}\sum_{x_{j^{*}}\in[L_{j^{*}}]}\prod_{i\in R_{j}^{*}}y_{i,\{x_{1},...,x_{q}\}}^{\hat{s}_{i}}\right)\ .
\]
Let us restrict our attention a single ``slice'': that is, an instance
of the term
\begin{equation}
\sum_{x_{j^{*}}\in[L_{j^{*}}]}\prod_{i\in R_{j^{*}}}y_{i,\{x_{1},...,x_{q}\}}^{\hat{s}_{i}}\label{eq:innerterm1sum}
\end{equation}
with fixed values for $x_{1}$ through $x_{q}$, excluding $x_{j^{*}}$.
By equality constraints, we get that all the nonzero values of $y_{i,\{x_{1},...,x_{q}\}}^{\hat{s}_{i}}$
must be equal to a constant independent of $x_{1},...,x_{q}$ (but
dependent on $i$). Let $m_{i}=\sum_{x_{i}^{*}\in[L_{j^{*}}]}y_{i,\{x_{1},...,x_{q}\}}$,
and $\sigma$ denote the number of $x_{j^{*}}$ such that $(x_{1},...,x_{q})$,
with all coordinates except $x_{j^{*}}$ set to our fixed values,
are in $\mathscr{S}$ (and therefore, nonzero terms in the above sum
(\ref{eq:innerterm1sum})); this restricts the nonzero values of $y_{i,\{x_{1},...,x_{q}\}}$
to $m_{i}/\sigma$. Therefore, we may rewrite the above sum as:
\begin{eqnarray*}
\sum_{x_{j^{*}}\in[L_{j^{*}}]}\prod_{i\in R_{j^{*}}}y_{i,\{x_{1},...,x_{q}\}}^{\hat{s}_{i}} & = & \sigma\prod_{i\in R_{j^{*}}}\left(\frac{m_{i}}{\sigma}\right)^{\hat{s}_{i}}\\
 & = & \sigma^{1-\sum_{i\in R_{j^{*}}}\hat{s}_{i}}\prod_{i\in R_{j^{*}}}m_{i}^{\hat{s}_{i}}
\end{eqnarray*}
As $\sum_{i\in R_{j^{*}}}\hat{s}_{i}\ge1$, the exponent of $\sigma$
in the above expression is nonpositive; therefore, this term is bounded
above by 
\[
\prod_{i\in R_{j^{*}}}m_{i}^{\hat{s}_{i}}
\]
which we get when we set $\sigma$ to $1$. As this upper bound holds
individually for each ``slice'', the value of the objective (\ref{eq:multismallobjintermsofs})
is upper bounded by setting $\sigma$ to $1$ for every slice, i.e.
adding an additional constraint forcing $L_{j^{*}}$ to 1, which is
equivalent to removing $x_{j*}$ from the problem entirely. Applying
our inductive hypothesis, we get that an upper bound is $M^{k'}$
where
\[
k'=\sum_{i=1}^{n}\hat{s}'_{i}+\sum_{j\in[1..q]\backslash\{j^{*}\}\st\sum_{i\in R_{j}}\hat{s}'_{i}\le1}\left[\beta_{j}\left(1-\sum_{i\in R_{j}}\hat{s}'_{i}\right)\right]\ .
\]
Since $\sum_{i\in R_{j^{*}}}\hat{s}_{i}\ge1$, there is no difference
between $\hat{s}_{i}'$ and $\hat{s}_{i}$ for all $i$, as the only
constraint that the former must satisfy that the latter is not required
to is $\sum_{i\in R_{j^{*}}}\hat{s}_{i}\ge1$, which holds regardless
in this case. Therefore, we can replace $\hat{s}_{i}'$ with $\hat{s}_{i}$
in order to completing the induction for the entire proof of Lemma
\ref{lem:slicing} in this particular case.
\item Now suppose $\sum_{i\in R_{j^{*}}}\hat{s}_{i}<1$. As $Q\subseteq R_{j^{*}}$,
it immediately follows that $\sum_{i\in Q}\hat{s}_{i}\le\sum_{i\in R_{j^{*}}}\hat{s}_{i}<1$.
Consider the values of $y_{i,x'}$, $y_{i,x''}$, and $y_{i,x^{*}}$;
we will show that a reassignment of these three values strictly increases
the objective.\\
Without loss of generality, we will assume $\prod_{i\notin Q}y_{i,x^{*}}^{\hat{s}_{i}}$
may be taken as nonzero. Why? Given some $i'$ is not in $Q$, then
by definition $\phi_{i'}$ must map $x',x^{*}$ to the same point,
or $x'',x^{*}$ to the same point. In the former case, we can set
$y_{i',x^{*}}=y_{i',x'}$ without violating any constraint, as we
can substitute the two terms freely in any constraint-sum involving
them; in the latter case, the same applies if we set $y_{i',x^{*}}=y_{i',x''}$.
As $x',x''\in\mathscr{S}$, both $y_{i',x'}$ and $y_{i',x''}$ must
be nonzero, so $y_{i',x^{*}}$ must be nonzero as well. As nonzero
values of $y_{i,x}$ are independent of $x$ for all $i$, we must
have 
\begin{equation}
y_{i',x^{*}}=y_{i',x''}=y_{i',x'}\label{eq:notinqequality}
\end{equation}
For all $i$, let the value of $y_{i,x'}+y_{i,x''}+y_{i,x^{*}}$ be
denoted $\mu_{i}$, and define $k',k'',k^{*}$ such that $y_{i,x'}=k'\mu_{i}$
(and likewise for $k''$, $k^{*}$); our starting configuration, with
$\mathscr{S}$ containing $x',x''$ but not $x^{*}$, is $k'=k''=1/2$,
$k^{*}=0$. So as not to break any constraints, we will require that
the value of $y_{i,x'}+y_{i,x''}+y_{i,x^{*}}$ stay constant, so we
will enforce $k'+k''+k^{*}=1$. The contribution of these three tiles
to the objective (\ref{eq:multismallobjnointersect}) is:
\begin{align*}
 & \prod_{i\in Q}y_{i,x'}^{\hat{s}_{i}}\prod_{i\notin Q}y_{i,x'}^{\hat{s}_{i}}+\prod_{i\in Q}y_{i,x''}^{\hat{s}_{i}}\prod_{i\notin Q}y_{i,x''}^{\hat{s}_{i}}+\prod_{i\in Q}y_{i,x^{*}}^{\hat{s}_{i}}\prod_{i\notin Q}y_{i,x^{*}}^{\hat{s}_{i}}\\
 & =\left(\prod_{i\in Q}y_{i,x'}^{\hat{s}_{i}}+\prod_{i\in Q}y_{i,x''}^{\hat{s}_{i}}+\prod_{i\in Q}y_{i,x^{*}}^{\hat{s}_{i}}\right)\prod_{i\notin Q}y_{i,x'}^{\hat{s}_{i}}
\end{align*}
with equality following from (\ref{eq:notinqequality}). We substitute
$y_{i,x'}=k'\mu_{i}$ and the corresponding definitions for $y_{i,x''},y_{i,x^{*}}$
to rewrite the above expression as:
\begin{align*}
 & \left(\prod_{i\in Q}\left(k'\mu_{i}\right)^{\hat{s}_{i}}+\prod_{i\in Q}\left(k''\mu_{i}\right)^{\hat{s}_{i}}+\prod_{i\in Q}\left(k^{*}\mu_{i}\right)^{\hat{s}_{i}}\right)\prod_{i\notin Q}y_{i,x'}^{\hat{s}_{i}}\\
 & =\left(k'^{\sum_{i\in Q}\hat{s}_{i}}\prod_{i\in Q}\mu_{i}^{\hat{s}_{i}}+k''^{\sum_{i\in Q}\hat{s}_{i}}\prod_{i\in Q}\mu_{i}^{\hat{s}_{i}}+k{}^{*\sum_{i\in Q}\hat{s}_{i}}\prod_{i\in Q}\mu_{i}^{\hat{s}_{i}}\right)\prod_{i\notin Q}y_{i,x'}^{\hat{s}_{i}}\\
 & =\left(k'^{\sum_{i\in Q}\hat{s}_{i}}+k''^{\sum_{i\in Q}\hat{s}_{i}}+k{}^{*\sum_{i\in Q}\hat{s}_{i}}\right)\prod_{i\notin Q}y_{i,x'}^{\hat{s}_{i}}\prod_{i\in Q}\mu_{i}^{\hat{s}_{i}}
\end{align*}
We will leave $y_{i,x'}$ constant for all $i\notin Q$, and we will
not vary $\mu_{i}$, the sum of $y_{i,x'}$, $y_{i,x''}$, and $y_{i,x^{*}}$,
so it suffices to maximize
\[
k'^{\sum_{i\in Q}\hat{s}_{i}}+k''^{\sum_{i\in Q}\hat{s}_{i}}+k{}^{*\sum_{i\in Q}\hat{s}_{i}}
\]
subject to
\[
k'+k''+k^{*}=1.
\]
As $\sum_{i\in Q}\hat{s}_{i}<1$, the solution to this maximization
problem is obtained by setting $k'=k''=k^{*}=1/3$; all other assignments
(including the current one) are suboptimal. As we do not vary any
$y_{i,x}$ for $i\notin Q$ and any $x$, this change does not affect
the constraints corresponding to any other $\phi_{i}$ than those
in $Q$, which all must be still satisfied as we do not vary $\mu_{i}$;
therefore, both these assignments satisfy the constraints (\ref{multismallconstraints}).
Therefore the current assignment under $\mathscr{S}$, with $k^{*}$
set to $0$ and $k',k''$ set to $1/2$, must be suboptimal, providing
us with our contradiction in this case. 
\end{enumerate}
\item If there exists $i$ in category (2a), but none in (1a) and (3), we
will replace $x'$ with $x^{*}$ in $\mathscr{S}$, decreasing $\left|\mathscr{S}_{i}\right|$
by one. The values of $\left|\mathscr{S}_{i}\right|$ for other $i$,
in this case, either also decrease (for other $i$s falling in case
(2a)), remain the same (for $i$s falling in cases (1b), (2b), (4)),
therefore corresponding to a strict improvement in the value of (\ref{eq:multismallobjintermsofs}).
\item If we have $i$ in categories (1b), (2b), or (4) (but none in categories
(1a), (2a), or (3), all of which were dealt with in earlier cases)
add $x^{*}$ to $\mathscr{S}$; this does not change any value of
$\mathscr{S}_{i}$, but increases $\mathscr{S}$ by $1$, leading
to a strictly improved solution.
\end{enumerate}
Each of these cases (except case (3a), which uses the inductive hypothesis),
presents a strict improvement to the value of (\ref{eq:multismallobjintermsofs}).
Therefore, $\mathscr{S}$ must not be optimum, providing a contradiction.
We can therefore conclude that optimum value of $\mathscr{S}$ must
have no triple $x',x''\in\mathscr{S}$, $x^{*}\notin\mathscr{S}$
such that $x^{*}$ agrees with $x'$ everywhere except one coordinate
where it agrees with $x''$, and therefore $\mathscr{S}$ must be
a rectangle, as desired. $\blacksquare$

Now that we've shown that $\mathscr{S}$ is a rectangle, let us assume
that its dimensions are $\rho_{1},...,\rho_{q}$. Then $\mathscr{S}$
has cardinality $\prod_{i\in[q]}\rho_{i}$, and $\mathscr{S}_{i}$
has cardinality $\prod_{j\in[q]\cap\supp(\phi_{i})}\rho_{j}$. Substituting
into (\ref{eq:multismallobjintermsofs}), we get:
\begin{eqnarray*}
\frac{\vert\mathscr{S}\vert}{\prod_{i}\vert\mathscr{S}_{i}\vert^{\hat{s}_{i}}}M^{\sum_{i}\hat{s}_{i}} & = & \frac{\prod_{i\in[q]}\rho_{i}}{\prod_{i\in[d]}\left(\prod_{j\in[q]\cap\supp(\phi_{i})}\rho_{j}\right)^{\hat{s}_{i}}}M^{\sum_{i}\hat{s}_{i}}\\
 & = & \frac{\prod_{i\in[q]}\rho_{i}}{\prod_{j\in[q]}\left(\prod_{i\in R_{j}}\rho_{j}^{\hat{s}_{i}}\right)}M^{\sum_{i}\hat{s}_{i}}\\
 & = & \prod_{j\in[q]}\rho_{j}^{1-\sum_{i\in R_{j}}\hat{s}_{i}}M^{\sum_{i}\hat{s}_{i}}
\end{eqnarray*}
Since we have full control over the value of $\rho_{i}$, we can maximize
the value of this expression by setting the $\rho_{i}$ to their maximum
possible value, $L_{i}$ if $1-\sum_{i\in R_{j}}\hat{s}_{i}\ge0$,
and to their minimum possible value, $1$, if $1-\sum_{i\in R_{j}}\hat{s}_{i}\le0$.

Therefore, the maximum value of our objective (\ref{eq:multismallobjnointersect})
is obtained at:
\[
M^{\sum_{i}\hat{s}_{i}}\prod_{j\in[q]\st\sum_{i\in R_{j}}\hat{s}_{i}\le1}L_{j}^{1-\sum_{i\in R_{j}}\hat{s}_{i}}
\]

or equivalently, $M^{k}$ where
\begin{equation}
k=\sum_{i=1}^{n}\hat{s}_{i}+\sum_{j\in[q]\st\sum_{i\in R_{j}}\hat{s}_{i}\le1}\left[\beta_{j}\left(1-\sum_{i\in R_{j}}\hat{s}_{i}\right)\right]\ .\label{eq:allqlowerbound}
\end{equation}
as desired.

Finally, we need to modify our solution to satisfy (\ref{eq:fixsuppconstraint})
with no change to the objective value.

Let $y'_{i,\{x_{1},...,x_{q}\}}$ be $\max_{x_{j}\st j\in\supp(\phi_{i})}y_{i,\{x_{1},...,x_{q}\}}$,
which takes on the value $\frac{M}{\vert\mathscr{S}_{i}\vert}$ if
there is some nonzero element of $\mathscr{S}$ that matches $(x_{1},...,x_{q})$
at the indices in the support of $\phi_{i}$, and is zero otherwise.
In order to show that this modification does not change the value
of objective (\ref{eq:multismallobjnointersect}), it suffices to
show that
\begin{equation}
y_{1,\{x_{1},...,x_{q}\}}^{\hat{s}_{1}}\dots y_{n,\{x_{1},...,x_{q}\}}^{\hat{s}_{n}}=\left(y'_{1,\{x_{1},...,x_{q}\}}\right)^{\hat{s}_{1}}\dots\left(y'_{n,\{x_{1},...,x_{q}\}}\right)^{\hat{s}_{n}}\label{eq:singleterm}
\end{equation}

Suppose $\left(x_{1},...,x_{q}\right)\in\mathscr{S}$. Both sides
are nonzero, and by equality constraint it is obvious that they must
be the same.

Suppose $\left(x_{1},...,x_{q}\right)\notin\mathscr{S}$. Clearly
the left is zero. Recall that $\mathscr{S}$ is a rectangle; that
is, it can be written as set $\{\left(x_{1},...,x_{n}\right):x_{i}\in C_{i}\forall i\}$
for some sets $C_{i}\subseteq[L_{i}]$. By definition, there must
exist some $k$ such that $x_{k}\notin C_{k}$. There must be some
some $j'$ such that $\phi_{j'}$ contains $x_{k}$; by definition,
$y'_{j',\{x_{1},...,x_{k},...,x_{j}\}}$ - and therefore, the entire
right-hand-side of (\ref{eq:singleterm}) - must be zero as well.

Furthermore, in order to show that this solution does not violate
any of the constraints, consider
\[
\sum_{x_{k}\in[1..L_{k}]\text{ for }k\in[q]\cap\supp(\phi_{i})}y'_{i,\{x_{1},...,x_{q}\}}
\]

By definition, at most $\mathscr{S}_{i}$ of these terms may be nonzero,
and each since must have value $M/\vert\mathscr{S}_{i}\vert$, this
term must be at most $M$, as desired.

Notice that this proof works if we fix any subset of $1..q$ rather
than the entire set. In other words, we can freely replace the sum
from $1$ to $q$ with a sum over any subset of $1$ to $q$ and still
get a valid upper bound (by changing the sum from $j\in[q]$ to summing
over a subset of $[q]$ in equation (\ref{eq:allqlowerbound})).
\end{proof}

\section{Tiling construction}

\label{sec:Tiling-construction}

In this section, we describe an explicit construction of a tiling
that achieves the upper bound on tile size (and therefore achieves
the lower bound on computation) from section \ref{sec:The-Lower-Bound}.

Consider the LP that gives us the tiling in this case. We start with
the linear program (\ref{eq:largealglp}) and add constraints requiring
that the blocks be no larger than the loop bounds (in log-space, $\lambda_{i}\le\beta_{i}$):

\begin{eqnarray}
\max\sum_{i\in d}\lambda_{i}\st\label{eq:algfull}\\
\sum_{i\text{ s.t. }x_{i}\in\supp(\phi_{j})}\lambda_{i} & \le & 1\qquad\forall j\in[n]\nonumber \\
\lambda_{i} & \le & \beta_{i}\qquad\forall i\in[q]\nonumber \\
\lambda_{i} & \ge & 0\qquad\forall i\in[d]\nonumber 
\end{eqnarray}

\begin{thm}
The rectangular tile with dimensions given by the solution to (\ref{eq:algfull})
has cardinality equal to one of the upper bounds for tile size from
Section \ref{sec:The-Lower-Bound} for a loop program defined by the
$\phi_{j}$; in other words, the solution to (\ref{eq:algfull}) equals
\begin{equation}
\sum_{i=1}^{n}\hat{s}_{\mathscr{Q},i}+\sum_{j\in\mathscr{Q}\st\sum_{i\in R_{j}}\hat{s}_{i}\le1}\left[\beta_{j}\left(1-\sum_{i\in R_{j}}\hat{s}_{\mathscr{Q},i}\right)\right]\label{eq:s5hblobj}
\end{equation}
for some $\mathscr{Q}\subseteq[q]$, where $\hat{s}_{\mathscr{Q},i}$
satisfies the constraints of the HBL LP (\ref{eq:hbl_largeindex_matrix_constraints})
with the rows indexed by elements of $\mathscr{Q}$ removed:\begin{equation}
\label{eq:hblconstraintst3}
\begin{bmatrix}  \multicolumn{3}{c}{\text{remove rows not in  \ensuremath{\mathscr{Q}}}}\\ \vert &  & \vert\\ \phi_{1} & \cdots & \phi_{n}\\ \vert &  & \vert \end{bmatrix}\begin{bmatrix}\hat{s}_{1}\\ \vdots\\ \hat{s}_{n} \end{bmatrix}\ge\begin{bmatrix}1\\ \vdots\\ 1 \end{bmatrix}
\end{equation}
\end{thm}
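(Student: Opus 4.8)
The plan is to analyze the LP (\ref{eq:algfull}) via strong LP duality and match its optimal value against the family of upper bounds (\ref{eq:s5hblobj}) indexed by $\mathscr{Q}\subseteq[q]$. First I would write down the dual of (\ref{eq:algfull}): dual variables $\hat s_j\ge 0$ for the $n$ constraints $\sum_{i:x_i\in\supp(\phi_j)}\lambda_i\le 1$ and dual variables $t_i\ge 0$ for the box constraints $\lambda_i\le\beta_i$ ($i\in[q]$). The dual is
\begin{eqnarray*}
\min \sum_{j\in[n]}\hat s_j + \sum_{i\in[q]}\beta_i t_i \st\\
\sum_{j\text{ s.t. }\supp(\phi_j)\ni x_i}\hat s_j + t_i &\ge& 1 \qquad \forall i\in[d],
\end{eqnarray*}
where we set $t_i=0$ for $i\notin[q]$ (there is no box constraint there). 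By LP strong duality, the value of (\ref{eq:algfull}) equals the optimum of this dual. The key structural observation is that, at an optimal dual solution, if we let $\mathscr{Q}=\{i\in[q]: t_i>0\}$ be the support of the box-constraint multipliers, then for $i\in\mathscr{Q}$ complementary slackness forces $\lambda_i=\beta_i$, and for $i\notin\mathscr{Q}$ we have $t_i=0$, so the dual constraint reads $\sum_{j:\supp(\phi_j)\ni x_i}\hat s_j\ge 1$ — exactly the constraint (\ref{eq:hblconstraintst3}) with rows not in $\mathscr{Q}$ kept (equivalently, the HBL LP with rows indexed by $[d]\setminus\mathscr{Q}$, but note the statement's indexing: the rows removed are those outside $\mathscr{Q}$ among the small ones — I would reconcile this with the $\mathscr{Q}$-removal convention from Theorem~\ref{lem:slicing}, where $\hat s_{\mathscr Q,i}$ solves the LP with rows in $\mathscr Q$ removed; the matching is between complementary-slackness-active box constraints and the non-removed rows).

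Next I would evaluate the dual objective at this optimal solution. For $i\in\mathscr{Q}$, since $t_i=1-\sum_{j:\supp(\phi_j)\ni x_i}\hat s_j = 1-\sum_{j\in R_i}\hat s_j$ (again using that at optimum the constraint is tight where $t_i>0$), the term $\beta_i t_i$ becomes $\beta_i(1-\sum_{j\in R_i}\hat s_j)$, and positivity of $t_i$ means $\sum_{j\in R_i}\hat s_j<1$, matching the side condition in (\ref{eq:s5hblobj}). Thus the dual optimum is
\[
\sum_{j\in[n]}\hat s_j + \sum_{i\in\mathscr{Q}\st \sum_{j\in R_i}\hat s_j\le 1}\beta_i\Bigl(1-\sum_{j\in R_i}\hat s_j\Bigr),
\]
which is precisely (\ref{eq:s5hblobj}) for this particular $\mathscr{Q}$ — provided the $\hat s_j$ appearing here actually coincide with $\hat s_{\mathscr{Q},j}$, the solution to the reduced HBL LP. To see that, I would argue that the $\hat s$-components of the dual optimum, restricted by the active constraints, are feasible for the reduced HBL LP, and that minimizing $\sum\hat s_j$ over that reduced polytope cannot do better than the dual optimum (otherwise we could lower the dual objective, contradicting optimality, since lowering the $\hat s_j$ while keeping constraints for $i\notin\mathscr Q$ tight and adjusting $t_i$ upward for $i\in\mathscr Q$ changes the objective by $\sum_{j}\Delta\hat s_j - \sum_{i\in\mathscr Q}\beta_i\sum_{j\in R_i}\Delta\hat s_j$, and for a valid choice of $\mathscr{Q}$ the LP (\ref{eq:algfull}) being a maximization means the $\beta_i$-weighted terms are bounded appropriately). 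Conversely, for the reverse inequality — that the value of (\ref{eq:algfull}) is at least (\ref{eq:s5hblobj}) for the $\mathscr{Q}$ realizing the $\min$ in Theorem~\ref{lem:slicing} — I would exhibit the primal-feasible $\lambda$ constructed by setting $\lambda_i=\beta_i$ for $i\in\mathscr{Q}$ and solving the residual large-index LP (\ref{eq:largealglp}) on the remaining coordinates, whose dual is exactly the reduced HBL LP; this is the natural "peel off the small dimensions, then tile the rest as in Section~\ref{sec:allarge}" construction, and it certifies a primal value equal to (\ref{eq:s5hblobj}).

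The main obstacle I anticipate is the bookkeeping around \emph{which} $\mathscr{Q}$ arises. The theorem only claims the LP value equals (\ref{eq:s5hblobj}) for \emph{some} $\mathscr{Q}\subseteq[q]$, not for the one minimizing the expression, so in principle the duality argument above already suffices for that weaker statement; but to make the result genuinely useful (and consistent with the lower bound of Section~\ref{sec:The-Lower-Bound}) one wants the $\mathscr{Q}$ that comes out of complementary slackness to be the minimizing one, and that requires checking that no other choice of active box constraints yields a smaller dual objective — i.e. that the minimization over $\mathscr{Q}$ in Theorem~\ref{lem:slicing} is exactly the minimization the dual LP performs internally when it chooses the support of $t$. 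I expect this to follow from the observation that for any fixed $\mathscr{Q}$, fixing $\lambda_i=\beta_i$ on $\mathscr{Q}$ and $t_i=0$ off $\mathscr{Q}$ gives a \emph{feasible} (generally suboptimal) dual point whose objective is (\ref{eq:s5hblobj}), so the true dual optimum is $\le\min_{\mathscr{Q}}(\ref{eq:s5hblobj})$; combined with the matching lower bound from Theorem~\ref{lem:slicing} (which says tile size is $\ge$ nothing — rather, that $M^{\hat k}$ with $\hat k=\min_{\mathscr Q}(\ref{eq:s5hblobj})$ is an \emph{upper} bound on tile size, hence the LP value, being a valid tile size, is $\le \hat k$) we would get equality. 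I would need to be careful about the direction of all these inequalities and about the degenerate cases where $\sum_{i\in R_j}\hat s_j=1$ exactly (where a box constraint is tight but its multiplier may be zero), handled as in case (3) of Lemma~\ref{lem:1dslicingdistro}.
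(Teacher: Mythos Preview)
Your approach is correct and shares the paper's first move---taking the dual of (\ref{eq:algfull}), with multipliers $s_j$ for the $\phi$-constraints and $\zeta_i$ (your $t_i$) for the box constraints---but diverges in how it extracts $\mathscr{Q}$. The paper argues by induction on $q$: at each step it checks whether $\sum_{i\in R_q}s'_i\ge 1$ at the dual optimum, and either sets $\zeta_q=0$ and recurses on $q-1$ (Case~1), or, once every $j\in[q]$ has $\sum_{i\in R_j}s'_i<1$, substitutes $\zeta_j=1-\sum_{i\in R_j}s_i$ for all $j$ and reads off (\ref{eq:s5hblobj}) with $\mathscr{Q}=[q]$ (Case~2). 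Your route---defining $\mathscr{Q}$ as the support of the box multipliers at the dual optimum and using tightness of the $i$th dual constraint on $\mathscr{Q}$ (which follows from optimality: if $t_i>0$, $\beta_i>0$, and the constraint were slack, decreasing $t_i$ would lower the objective) to substitute $t_i=1-\sum_{j\in R_i}\hat s_j$ directly into the dual objective---is more direct and avoids the induction entirely.

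Two parts of your writeup are unnecessary, however. First, the theorem only requires the LP value to equal (\ref{eq:s5hblobj}) for \emph{some} $\mathscr{Q}$ and \emph{some} $\hat s$ feasible for (\ref{eq:hblconstraintst3}); it does not require $\hat s$ to be the optimizer of the reduced HBL LP, so you need not argue that your $\hat s_j$ ``coincide with $\hat s_{\mathscr{Q},j}$''---feasibility for the rows outside $\mathscr{Q}$ is immediate from $t_i=0$ there. Second, there is no reverse inequality to establish: strong duality already gives equality of the primal and dual optima, and your substitution shows the dual optimum \emph{is} of the required form. Your entire ``main obstacle'' paragraph, including the primal construction that peels off the small dimensions, can be dropped; the degenerate case $\sum_{j\in R_i}\hat s_j=1$ with $t_i=0$ is harmless since including or excluding such $i$ in $\mathscr{Q}$ changes (\ref{eq:s5hblobj}) by zero.
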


Let us write the constraints of (\ref{eq:algfull}) in the following
fashion:

\begin{equation}
\label{eq:algconstraints}\begin{tabular}{l}
$\lefteqn{\phantom{\begin{matrix} \phi_1 \\ \vdots \\\phi_n \ \end{matrix}}}$\\
$\left.\lefteqn{\phantom{\begin{matrix} b_0\\ a\\ \ddots\\ b_0\ \end{matrix}}}q\right\{$
\end{tabular}
\begin{bmatrix} & - &  & \phi_{1} &  & -\\
&  &  & \vdots\\
& - &  & \phi_{n} &  & -\\
1 & 0 & \cdots & 0 & 0 & \cdots & 0\\
0 & 1 & \cdots & 0 & 0 & \cdots & 0\\
\vdots & \vdots & \ddots & \vdots & \vdots & \ddots & \vdots\\
0 & 0 & \mathclap{\hspace{-1em}\underbrace{\makebox[7em]{$\hspace{1em}\cdots$}}_{q}} & 1 & 0 & \cdots & 0
\end{bmatrix}
\begin{bmatrix}\lambda_{1}\\ \vdots\\ \lambda_{d} \end{bmatrix}\le\begin{bmatrix}1\\ \vdots\\ 1\\ \beta_{1}\\ \vdots\\ \beta_{q} \end{bmatrix} 
\end{equation}

The dual of this linear program, with variables $\zeta_{1},...,\zeta_{q},s_{1},...,s_{n}$
is to minimize 
\begin{equation}
\sum_{i\in[q]}\beta_{i}\zeta_{i}+\sum_{j=1}^{n}s_{j}\label{eq:algdual_obj}
\end{equation}
subject to

\begin{equation}
\label{eq:algdualconstraints}
\begin{tabular}{l}
$\left.\lefteqn{\phantom{\begin{matrix} 1 \\ \vdots \\0 \end{matrix}}}q\right\{$\\
$\lefteqn{\phantom{\begin{matrix} \vdots \\ 0 \end{matrix}}}$
\end{tabular}
\begin{bmatrix}1 & \cdots & 0\\ \vdots & \ddots & \vdots & \vert &  & \vert\\ 0 & \cdots & 1 & \phi_{1} & \cdots & \phi_{n}\\ \vdots &  & \vdots & \vert &  & \vert\\ 0 & \cdots & 0 \end{bmatrix}\begin{bmatrix}\zeta_{1}\\ \vdots\\ \zeta_{q}\\ s_{1}\\ \vdots\\ s_{n} \end{bmatrix}\ge\begin{bmatrix}1\\ \vdots\\ 1 \end{bmatrix}
\end{equation}(as well as nonnegativity constraints $\zeta_{i}\ge0$ for all $i\in[q]$,
$s_{i}\ge0$ for all $i\in[n]$, which we omit from the matrix for
brevity)

We now show that the optimal value of (\ref{eq:algdual_obj}) is equivalent
to (\ref{eq:s5hblobj}) for some $\hat{s}_{i}$ satisfying (\ref{eq:hblconstraintst3}).
\begin{proof}
By induction on $q$.

For the base case, suppose $q=0$. This is just the case in Section
\ref{sec:allarge}.

Suppose for induction that the solution to 
\begin{eqnarray}
\max\sum_{i}\lambda_{i}\st\label{eq:algfull-inductive}\\
\sum_{i\text{ s.t. }x_{i}\in\supp(\phi_{j})}\lambda_{i} & \le & 1\qquad\forall j\in[n]\nonumber \\
\lambda_{i} & \le & \beta_{i}\qquad\forall i\in[q-1]\nonumber 
\end{eqnarray}
takes the form
\[
\sum_{i=1}^{n}\hat{s}_{\mathscr{Q},i}+\sum_{j\in\mathscr{Q}\st\sum_{i\in R_{j}}\hat{s}_{i}\le1}\left[\beta_{j}\left(1-\sum_{i\in R_{j}}\hat{s}_{\mathscr{Q},i}\right)\right]
\]
for some $\mathscr{Q}\subseteq[q-1]$ and $\hat{s}_{i}$ satisfying
(\ref{eq:hblconstraintst3}).

Consider the LP: minimize (\ref{eq:algdual_obj}) subject to (\ref{eq:algdualconstraints}).
Denote its solution by $\zeta'_{i}$, $s'_{i}$; we wish to discover
the minimum value of the objective (\ref{eq:algdual_obj}).

We will rewrite the LP (\ref{eq:algdualconstraints}) in such a way
that preserves the optimal value of the objective. First, we remove
one variable - say, $\zeta_{q}$ - from it. Since there is no benefit
to setting $\zeta_{q}$ any larger than necessary (it increases the
objective (\ref{eq:algdual_obj}), and does not come into play in
any other constraints) we can fix its value as necessary to ensure
that either the $q$th constraint or the nonnegativity constraint
$\zeta_{q}\ge0$ is tight. We have two cases:

Case 1: $\sum_{i\in R_{q}}s'_{i}\ge1$. In this case, the $q$th constraint
is satisfied at the optimal point regardless of the value of $\zeta'_{q}$,
so we may set $\zeta_{q}$ to $0$. Now, the objective (\ref{eq:algdual_obj})
becomes:
\[
\sum_{i=1}^{q-1}\beta_{i}\zeta_{i}+\sum_{j=1}^{n}s_{j}
\]
Since the $q$th constraint is the only one containing $\zeta_{q}$,
we can delete the $q$th column on the left block of the constraint
matrix (\ref{eq:algdualconstraints}) and remove $\zeta_{q}$ from
the LP entirely. Therefore, the resulting LP is therefore exactly
the dual of  (\ref{eq:algfull-inductive}), which, by inductive hypothesis,
has optimal objective value of the form:
\[
\sum_{i=1}^{n}\hat{s}_{\mathscr{Q},i}+\sum_{j\in\mathscr{Q}\st\sum_{i\in R_{j}}\hat{s}_{i}\le1}\left[\beta_{j}\left(1-\sum_{i\in R_{j}}\hat{s}_{\mathscr{Q},i}\right)\right]
\]
for $\mathscr{Q}\subseteq[q-1]\subset[q]$, and $\hat{s}_{\mathscr{Q},i}$
satisfying (\ref{eq:hblconstraintst3}) as desired.

Case 2: $\sum_{i\in R_{q}}s'_{i}<1$. Without loss of generality,
assume this holds for $R_{1}$ through $R_{q-1}$ as well (if not,
find $j$ such that $\sum_{x\in R_{j}}s'_{i}\ge1$, permute the LP
to swap the positions of $\zeta_{j}$ and $\zeta_{q}$, and proceed
to case 1).

Therefore, we may modify the LP by setting $\zeta_{1}$ to $1-\sum_{i\in R_{1}}s_{i}$
to keep it tight, and do the same with $\zeta_{2}$ through $\zeta_{q}$;
this does not change the optimal objective value. Removing those constraints
(since they've all been encoded into the objective), we get a new
objective to replace (\ref{eq:algdual_obj}) in our linear program:
\[
\min\sum_{i=1}^{n}s_{i}+\sum_{j=1}^{q}\left[\beta_{j}\left(1-\sum_{i\in R_{j}}s_{i}\right)\right]
\]

Furthermore, since $\sum_{i\in R_{j}}s'_{i}<1$ for all $j\in[q]$
this objective at its optimizer, $s_{1}',...,s_{q}'$, is precisely
equal to 
\[
\sum_{i=1}^{n}s'_{i}+\sum_{j\in[q]\st\sum_{i\in R_{j}}\hat{s}_{i}\le1}\left[\beta_{j}\left(1-\sum_{i\in R_{j}}s'_{i}\right)\right]
\]
which is of the same form as (\ref{eq:s5hblobj}).

Furthermore, we may remove the first $q$ constraints from (\ref{eq:algdualconstraints}),
since our choices for values of $\zeta_{1},...,\zeta_{q}$ guarantee
that they will be satisfied. The resulting constraint matrix is identical
to (\ref{eq:hblconstraintst3}).

Therefore, the tile whose dimensions are given by \ref{eq:algfull}
attains the lower bound given by Lemma \ref{lem:slicing} with $\mathscr{Q}=[q]$,
as desired.
\end{proof}

\section{Examples}

\label{sec:Examples-and-Applications}

We demonstrate several applications of our theory below.

\subsection{Matrix-Matrix and Matrix-Vector Multiplication}

We start by re-deriving the classical lower bound \cite{HK81} for
the triply-nested-loop matrix multiplication
\begin{align*}
 & {\rm for}\,\{x_{1},x_{2},x_{3}\}\in[L_{1}]\times[L_{2}]\times[L_{d}]\\
 & \ \ \ \ \;\;A_{1}(x_{1},x_{3})+=A_{2}(x_{1},x_{2})\times A_{3}(x_{2},x_{3})
\end{align*}

Our memory accesses are given by the functions:
\begin{eqnarray*}
\phi_{1}(x_{1},x_{2},x_{3}) & = & (x_{1},x_{3})\\
\phi_{2}(x_{1},x_{2},x_{3}) & = & (x_{1},x_{2})\\
\phi_{3}(x_{1},x_{2},x_{3}) & = & (x_{2},x_{3})
\end{eqnarray*}
Therefore, the HBL LP is to minimize $s_{1}+s_{2}+s_{3}$ subject
to
\begin{equation}
\begin{bmatrix}1 & 1 & 0\\
0 & 1 & 1\\
1 & 0 & 1
\end{bmatrix}\begin{bmatrix}s_{1}\\
s_{2}\\
s_{2}
\end{bmatrix}\ge\begin{bmatrix}1\\
1\\
1
\end{bmatrix}\ .\label{eq:gemm-hbl-primal}
\end{equation}
The optimal value of this LP is obtained when all the $s_{i}$ are
$1/2$, giving a tile size upper bound of $M^{1/2+1/2+1/2}=M^{3/2}$,
which provides the standard $L_{1}L_{2}L_{3}/M^{1l2}$ lower bound.

Now let us consider the case where $L_{3}$ may be small, which corresponds
to problem sizes approaching matrix-vector multiplications (which
occurs $L_{3}=1$). In this case, our tile, which has length $M^{1/2}$
in the $L_{3}$ dimension, cannot fit in our iteration space.s

We first find a lower bound. Removing the row corresponding to $x_{3}$
from (\ref{eq:gemm-hbl-primal}), we get that given any $\hat{s}_{i}$
satisfying
\begin{equation}
\begin{bmatrix}1 & 1 & 0\\
0 & 1 & 1
\end{bmatrix}\begin{bmatrix}\hat{s}_{1}\\
\hat{s}_{2}\\
\hat{s}_{3}
\end{bmatrix}\ge\begin{bmatrix}1\\
1\\
1
\end{bmatrix}\label{eq:gemm-hbl-primal-clipped}
\end{equation}
raising $M$ to the power
\[
\max\left\{ \hat{s}_{1}+\hat{s}_{2}+\hat{s}_{3},\hat{s}_{1}+\hat{s}_{2}+\hat{s}_{3}+(\log_{M}L_{3})(1-\hat{s}_{1}-\hat{s}_{3})\right\} 
\]
represents a valid upper bound on the tile size.

Since (\ref{eq:gemm-hbl-primal-clipped}) is satisfied when $\hat{s}_{2}=1$
and $\hat{s}_{1},\hat{s}_{3}=0$, this term becomes
\[
\max\left\{ 1,1+\log_{M}L_{3}\right\} 
\]
giving an upper bound of $\max\left\{ M,ML_{3}\right\} =ML_{3}$ (as
$L_{3}$ is always positive); therefore the communication lower bound
is given by
\[
\frac{L_{1}L_{2}L_{3}}{ML_{3}}M=L_{1}L_{2}\ .
\]
This is as expected, since we need to read at least $L_{1}L_{2}$,
the size of $A_{2}$, into fast memory to perform the operation. \emph{}

Now let us consider the question of finding the tile. Instantiating
LP (\ref{eq:algfull}) with the relevant values of $\phi_{1,2,3}$,
we get:
\begin{equation}
\begin{aligned}\max & \lambda_{1}+\lambda_{2}+\lambda_{3}\ \st\\
 & \lambda_{1}+\lambda_{3}\le1\\
 & \lambda_{1}+\lambda_{2}\le1\\
 & \lambda_{2}+\lambda_{3}\le1\\
 & \lambda_{3}\le\beta_{3}=\log_{M}L_{3}
\end{aligned}
\label{eq:gemmarray}
\end{equation}

There are two cases here: if $\beta_{3}\ge1$, then the last constraint
is of no relevance, so the solution becomes $3/2$, as in the case
above .

On the other hand, if $\beta_{3}\le1$, then adding the second and
fourth inequalities gives 
\begin{equation}
\lambda_{1}+\lambda_{2}+\lambda_{3}\le1+\lambda_{3}\le1+\beta_{3}\ .\label{eq:sumineqmatmul}
\end{equation}
We again split based on whether or not $\beta_{3}\ge1/2$; intuitively,
we may consider this a question of whether the $L_{3}$ is sufficiently
large (at least $\sqrt{M}$) to fit the $\sqrt{M}\times\sqrt{M}\times\sqrt{M}$
tile derived above, or whether we must modify the tile's shape to
get it to fit in the $L_{3}$ dimension.

If $\beta_{3}\ge1/2$, then the optimum for the LP without the fourth
constraint, $\lambda_{1}=\lambda_{2}=\lambda_{3}=1/2$, satisfies
the fourth constraint and is therefore optimal, leading to the same
$\sqrt{M}\times\sqrt{M}\times\sqrt{M}$ as in the ``large loop bound''
cases discussed above.

If $\beta_{3}\le1/2$, then we can set $\lambda_{3}=\beta_{3}$ to
make the fourth inequality tight, and then set $\lambda_{1}=1-\beta_{3}$
and $\lambda_{2}=\beta_{3}$ to tighten \ref{eq:sumineqmatmul} in
addition to the first inequality in the LP; as three irredundant inequalities
are tight and we only have three variables, this solution must be
optimal as well. This obtains a tile size of $M/L_{3}\times L_{3}\times L_{3}=ML_{3}$
(with a communication cost of $L_{1}L_{2}$, a quantity that is equal
to the size of $A_{2}$ and therefore must be optimal) as expected.

Alternatively, we could achieve the same tile size with a tile of
size $\sqrt{M}\times\sqrt{M}\times L_{3}$ (corresponding to $\lambda=\lambda_{2}=1/2$,
$\lambda_{3}=\beta_{3})$. In fact, the LP is optimized by any point
between the two solutions we found previously; specifically, for any
$\alpha\le1$,
\begin{eqnarray*}
\lambda_{1} & = & \alpha/2+(1-\alpha)(1-\beta_{3})\\
\lambda_{2} & = & \alpha/2+(1-\alpha)\beta_{3}\\
\lambda_{3} & = & \beta_{3}
\end{eqnarray*}
optimizes LP (\ref{eq:gemmarray}); this corresponds to a tile size
of:
\[
\frac{M^{1-\alpha/2}}{L_{3}^{1-\alpha}}\times M^{\alpha/2}L_{3}^{1-\alpha}\times L_{3}\ .
\]

When attempting to optimize this matrix multiplication on a real-world
system, we may select any tiling from the above $\alpha$-parameterized
family of optimal tilings in order to find one that runs well in practice
(e.g. inner loops being multiples of cache line lengths or vector
units).

As the communication cost's derivation is symmetrical (i.e. it continues
to be valid when we swap the subscripts) and the tile for the small-$L_{3}$
case above remains be a legal tiling if $L_{3}$ is the smallest loop
index, we obtain the following \emph{tight} lower bound for matrix
multiplication's communication cost:
\[
\max(L_{1}L_{2}L_{3}/\sqrt{M},L_{1}L_{2},L_{2}L_{3},L_{1}L_{3})
\]

\subsection{Tensor Contraction}

Let $1\le j<k-1<d$. Let us consider a tensor contraction of the form
\begin{align*}
 & {\rm for}\,\{x_{1},...,x_{d}\}\in[L_{1}]\times...\times[L_{d}]\\
 & \ \ \ \ \;\;A_{1}(x_{1},...,x_{j},x_{k},...,x_{d})+=A_{2}(i_{1},...,i_{k-1})\times A_{3}(x_{j+1},x_{d})
\end{align*}

This nested-loop model encapsulates several machine learning applications.
For instance, \emph{pointwise convolutions} - convolutions with $1\times1$
filters, often used along depth-separable convolutions \cite{HZCKWWAA17}
to mimic the effect of standard machine learning convolutions with
less memory usage, may be represented as tensor contractions:

\begin{align}
 & {\rm for}\,\{b,c,k,w,h\}=0:\{B,C,K,W,H\}-1\nonumber \\
 & \ \ \ \ \;\;Out(k,h,w,b)+=Image(w,h,c,b)\times Filter(k,c)\label{eqn_CNN}
\end{align}
The same holds for fully connected convolutional layers.

The communication lower bound for the large-loop bound case is, as
derived in \cite{CDK+13}, is $L_{1}...L_{d}/\sqrt{M}$.

We instantiate the LP \ref{eq:algfull} to get:
\[
\max\lambda_{1}+...+\lambda_{d}
\]
subject to
\begin{eqnarray*}
\lambda_{1}+...+\lambda_{j}+\lambda_{k}+....+\lambda_{d} & \le & 1\\
\lambda_{1}+...+\lambda_{k-1} & \le & 1\\
\lambda_{j+1}+...+\lambda_{d} & \le & 1\\
\lambda_{1} & \le & \beta_{1}=\log_{M}L_{1}\\
 & \vdots\\
\lambda_{d} & \le & \beta_{d}=\log_{M}L_{d}
\end{eqnarray*}
The structure of this linear program is much like that of matrix multiplication,
and it can be transformed into one identical to that for matrix multiplication.
Let $\gamma_{1}=\sum_{i\in[j]}\lambda_{i}$, $\gamma_{2}=\sum_{i\in[j+1,k-1]}\lambda_{i}$,
and $\gamma_{3}=\sum_{i\in[k,d]}\lambda_{i}$. Then we can rewrite
the linear program as maximizing $\gamma_{1}+\gamma_{2}+\gamma_{3}$
subject to:
\begin{eqnarray*}
\gamma_{1}+\gamma_{3} & \le & 1\\
\gamma_{1}+\gamma_{2} & \le & 1\\
\gamma_{2}+\gamma_{3} & \le & 1\\
\gamma_{1} & \le & \sum_{i\in[j]}\beta_{i}\\
\gamma_{2} & \le & \sum_{i\in[j+1,k-1]}\beta_{i}\\
\gamma_{3} & \le & \sum_{i\in[k,d]}\beta_{i}
\end{eqnarray*}
As this linear program is identical to that for matrix multiplication,
it immediately follows that its optimum is either $3/2$ or $1+\min\left\{ \sum_{i\in[j]}\beta_{i},\sum_{i\in[j+1,k-1]}\beta_{i},\sum_{i\in[k,d]}\beta_{i}\right\} $,
whichever is smaller for the given program.

\subsection{$n$-body Pairwise Interactions}

Suppose we have a list of $n$ objects, and each object interacts
with every other object. This comes up frequently in many scientific
computing applications (e.g. particle simulations), as well as database
joins.

The nested loops for this problem are (for some arbitrary function
$f$):

\begin{align*}
 & {\rm for}\,\{x_{1},x_{2}\}\in[L_{1}]\times[L_{2}]\\
 & \ \ \ \ \;\;A_{1}[x_{1}]=f(A_{2}[x_{1}],A_{3}[x_{3}])
\end{align*}

Instantiating \ref{eq:algfull}, we get:
\[
\begin{aligned}\max & \lambda_{1}+\lambda_{2}\ \st\\
 & \lambda_{1}\le1\\
 & \lambda_{2}\le1\\
 & \lambda_{1}\le\beta_{1}=\log_{M}L_{1}\\
 & \lambda_{2}\le\beta_{2}=\log_{M}L_{2}
\end{aligned}
\]
which gives us a maximum tile size of $\min\left\{ M^{2},L_{1}M,L_{2}M,L_{1}L_{2}\right\} $
and a maximum communication cost of $\min\left\{ L_{1}L_{2}/M,L_{2},L_{1},M\right\} $.
The last term, $M$, is a result of the assumption in our model that
each tile carries $M$ words of memory into cache. Therefore, it is
important to note that \emph{if total amount of memory required to
execute the program without going back to main memory is less than
$M$, the output of the program will still be $M$, when in the actual
cost is in fact the sum of the sizes of the matrices.}

\emph{}

\section{Discussion and Future Work}

\label{sec:Discussion-and-Future}

In this paper, we have shown a systematic, efficiently computable
way of determining optimal tilings for projective loop nests of arbitrary
size, and used it to rederive several tight lower bounds that have
hitherto largely been computed by a problem-specific approach.

Our approach reveals some structural properties of the tile as well:
All such loop nests share an optimal tile shape (rectangles). Furthermore,
as the optimal tile's dimension for \emph{any} projective loop nest
is the solution to a linearly parameterized linear program, its cardinality
for a given loop nest must be of the form $M^{f(L_{1},...,L_{d})}$
for some piecewise linear function $f$. In fact, for a given loop
nest, we may programmatically find a closed form of $f$ by feeding
LP (\ref{eq:algfull}), which calculates the dimensions of the tile,
into a multiparametric linear program solver, e.g. that of \cite{BBM03},
as in \cite{DD18}. This piecewise-linear structure has also been
previously shown to hold for convolutions \cite{DD18}, and \emph{we
conjecture that this property holds even in the general, non-projective
case as well.}

The immediate application we see for our approach is as compiler optimization
to automatically block projective nested loops. While many such common
loops have already been extensively optimized in high-performance
libraries (and some of these optimizations have been implemented in
compilers, e.g. \texttt{icc}'s --opt-matmul flag), our techniques
are fully general - applying to applications (e.g. pairwise interactions)
that do not fit this mold - and do not require programmers to have
any familiarity with specific high performance libraries, only access
to a compiler with the right optimizations.

Furthermore, as the memory model we use can be generalized to multiprocessor
machines (as in \cite{Kni15}, following the approach of \cite{ITT04}),
our work also provides evidence for the intuition that the best way
to split projective loop-nest tasks up on a multiprocessor system
is to assign each processor a rectangular subset of the iteration
space.

Our work is intended as a first step towards generally optimizing
\emph{non-projective} nested loops, such as those found in neural
nets, image processing, and other similar structured computations,
many of which lack well-studied high-performance implementations \cite{BI19}.
Algorithms to find such tilings - and the shapes thereof - are known\footnote{Such algorithms, which enumerate all the constraints of the HBL linear
program, are in general hard (double exponential in $n$ and $d$,
as of the time of publication of this paper). However, as the cost
only needs to be incurred once (e.g. during a computation of a highly
performance sensitive kernel), and as $n$ and $d$ tend to be relatively
small in practice, this is less of an impediment than it might appear
at first glance.} for problems with large indices \cite{DR16,CDK+13}; however, a general
method for addressing the small-bound case, which occurs in many applications
(including most machine learning ones, where, for instance, filter
sizes tend to vary), is still unknown, and is left to future work.

\section*{Acknowledgements}

\label{sec:ack}

We would like to thank Tarun Kathuria for helpful discussions.

This material is based upon work supported by the US Department of
Energy, Office of Science under Award Numbers 7081675 and 1772593;
Cray, under Award Number 47277; and DARPA, under Award Number FA8750-17-2-0091.

\bibliographystyle{alpha}
\bibliography{main}

\end{document}